\theoremstyle{remark}
\newtheorem{thm}{Theorem}[section]
\newtheorem{lemma}{Lemma}[section]
\newtheorem{Remark}{Remark}[section]
\newtheorem{Corollary}{Corollary}[section]
\newtheorem{assumption}{Assumption}[section]
\def\R{\mathbb R}
\def\E{\mathbb E}
\def\cm{\text{Cum}}
\newcommand{\im}{\mathrm{i}}
\newcommand{\rnum}{\mathbb{R}}
\newcommand{\znum}{\mathbb{Z}}
\newcommand{\nnum}{\mathbb{N}}
\newcommand{\XT}[1]{X_{#1, T}}
\newcommand{\Xu}[1]{X^{(u)}_{#1}}
\newcommand{\cumbp}[2]{\kappa_{{#1};t_1,\ldots,t_{{#2}-1}}}
\newcommand{\F}{\mathcal{F}}
\newcommand\tageq{\addtocounter{equation}{1}\tag{\theequation}}
\begin{document}

\begin{frontmatter}
\title{Adaptive Frequency Band Analysis for Functional Time Series}
\runtitle{Frequency Band Learning for Functional Time Series}
\runauthor{P. Bagchi and S.A. Bruce}

\begin{aug}
\author{\fnms{Pramita} \snm{Bagchi}\ead[label=e1]{pbagchi@gmu.edu}}
\and
\author{\fnms{Scott} \snm{Bruce}\ead[label=e2]{sbruce7@gmu.edu}}

\address{Department of Statistics, George Mason University
}

\end{aug}

\begin{abstract}
The frequency-domain properties of nonstationary functional time series often contain valuable information. These properties are characterized through its time-varying power spectrum. Practitioners seeking low-dimensional summary measures of the power spectrum often partition frequencies into bands and create collapsed measures of power within bands. However, standard frequency bands have largely been developed through manual inspection of time series data and may not adequately summarize power spectra. In this article, we propose a framework for adaptive frequency band estimation of nonstationary functional time series that optimally summarizes the time-varying dynamics of the series. We develop a scan statistic and search algorithm to detect changes in the frequency domain. We establish theoretical properties of this framework and develop a computationally-efficient implementation. The validity of our method is also justified through numerous simulation studies and an application to analyzing electroencephalogram data in participants alternating between eyes open and eyes closed conditions.
\end{abstract}

\begin{keyword}[class=MSC2020]
\kwd[\href{https://mathscinet.ams.org/mathscinet/msc/msc2020.html}{MSC 2020 subject classifications. }]{62M15}
\kwd{62G20}
\end{keyword}

\begin{keyword}
\kwd{Frequency band estimation}
\kwd{functional time series}
\kwd{locally stationary}
\kwd{multitaper estimation}
\kwd{spectrum analysis.}
\end{keyword}

\end{frontmatter}


\section{Introduction}

Functional  data  has  emerged  as  an  important  object  of interest in statistics in recent years as advances in technology have led to an abundance of high-dimensional and high resolution data.  While classical statistical methods often fail in this setting, functional data analysis techniques use the smooth structure of the observed process in order to model non-sparse, high-dimensional and high-resolution data.  The term “functional data analysis” was coined by \citep{ramsay1982data} and \citep{ramsay1991}, but the history of this area is much longer, dating back to \citep{grenander1950stochastic} and \citep{rao1958some}. The intrinsic high, or rather infinite, dimensionality of such data poses interesting challenges in both theory and computation and has garnered a considerable amount of attention within the statistics community.
Functional time series data often arise in many important problems, such as analyzing forward curves derived from commodity futures \citep{HORVATH2020646}, daily patterns of geophysical and environmental data \citep{rubin2020}, demographic quantities, such as age-specific fertility or mortality rates studied over time \citep{shanghyndman2017}, and neurophysiological data, such as electroencephalography (EEG) and functional magnetic resonance imaging (fMRI), recorded at various locations in the brain \citep{GLENDINNING200779,STOEHR2020}. For example, NASA records surface temperatures for more than 5000 locations, and these readings are used to identify yearly temperature anomalies, which are crucial in studying global warming patterns \citep{nasa}.  In practice, such data are typically analyzed separately for each location, which is computationally expensive and fails to account for the spatial structure of the data.  However, it is reasonable to assume temperatures vary smoothly across locations, and thus ideal to analyze the collection of readings across locations as a functional data object. 

The frequency-domain properties of time series data, including the aforementioned functional time series data examples, often contain valuable information.  These properties are characterized through its power spectrum, which describes the
contribution to the variability of a time series from waveforms oscillating at different frequencies.  Practitioners seeking practical,
low-dimensional summary measures of the power spectrum often partition frequencies into bands and create collapsed measures of power within these bands. In practice, frequency band summary measures are used in a wide variety of contexts, such as to summarize seasonal patterns in environmental data, and to measure association between frequency-domain characteristics of EEG data and cognitive processes \citep{Klimesch1999}. In the scientific literature, standard frequency bands used for analysis have largely been developed through manual
inspection of time series data.  This is accomplished by noting prominent oscillatory patterns in the data and forming frequency bands
that largely account for these dominant waveforms.  For example, frequency-domain analysis of heart rate variability (HRV) data began in the late 1960s \citep{Billman2011} and led to the development of three primary frequency bands used to summarize power spectra: very low frequency (VLF) ($\le$ 0.04 Hz), low frequency (LF) (0.04-0.15 Hz), and high frequency (HF) (0.15-0.4 Hz) \citep{HRVtaskforce96}.  Within these frequency bands, collapsed measures of power are used to summarize the frequency-domain properties of the data and provide a basis for comparison.  However, fixed bands  that are not allowed to vary across subjects, covariates, experimental variables, or other
settings may not adequately summarize the power spectrum.  For example, \citep{Klimeschetal1998} shows that in the study of EEG data, different frequency bands within the fixed alpha band (8-13 Hz) reflect quite different cognitive processes.  This indicates that the \emph{number} of frequency bands needed to adequately summarize the power spectrum may depend on
experimental factors.  \citep{Doppelmayretal1998} also notes the \emph{endpoints} for the alpha frequency band may vary across subjects and proposes a data-adaptive definition for alpha band power. Both of these examples illustrate the need for a standardized, quantitative approach to frequency band estimation that provides a data-driven determination of the number of frequency bands and their respective endpoints.

For univariate nonstationary time series, a data-adaptive framework for identifying frequency bands that best preserve time-varying dynamics has been introduced in \citep{EBAsubmitted}. They develop a frequency-domain scan and hypothesis testing procedure and search algorithm to detect and validate changes in nonstationary behavior across frequencies.  
A rigorous framework for studying the frequency domain properties of functional time series was established in \citep{panaretos2013cramer} and \citep{panaretos2013fourier}. The spectral characteristics of locally stationary functional time series have been developed by \citep{vde16} and have been subsequently studied in several papers including \citep{Aue_Delft}, \citep{van2021similarity} and \citep{van2017nonparametric}.  In this paper, we develop a scan statistic for identifying the optimal frequency band structures for characterizing functional time series data. Developing appropriate scan statistics for the functional domain poses two important challenges. As the periodogram itself is an inconsistent estimator of the power spectrum, the univariate scan statistic was developed using a local multitapered periodogram estimator \citep{EBAsubmitted}. The corresponding test for detecting changes in the time-varying dynamics of the power spectrum across frequencies uses as asymptotic $\chi^2$ distribution approximation of the local multitapered periodograms.
However, the asymptotic behavior of the multitaper periodogram estimator for functional time series is not well studied.  In fact,  the approximation of such a  quantity  by  a $\chi^2$ distribution  is  not  possible  in  the  functional  case.   Hence,  a  functional generalization of this result
requires  a  completely different approach. In this paper, we derive a central limit theorem type result for the functional multitaper periodogram and establish a uniform distributional approximation of the functional scan statistics to a quadratic functional of a Gaussian process. The second challenge is implementation of the developed method, which is challenging largely due to the non-standard limit distribution and high-dimensionality of the data. In this work, we propose an efficient algorithm to detect the frequency band structure based on a finite dimensional projection of the functional scan statistic. Moreover, we propose a computationally-efficient and memory-smart modification of the algorithm based on the intrinsic structure of the asymptotic distribution.  These modifications enable the proposed methodology to be applied to longer and more densely-observed functional time series typically encountered in practice.  

The rest of the paper is organized as follows.  Section \ref{sec2} introduces the locally stationary functional time series model and offers a definition of the frequency-banded power spectrum.   Section \ref{sec3} first provides an overview of the local multitaper periodogram estimator for the power spectrum, which is used as a first step in the proposed procedure.   It should be noted that the proposed procedure can be carried out with any consistent time-varying spectral estimator, but this work focuses on the local multitaper periodogram estimator due to favorable empirical and theoretical properties established herein.  Section \ref{sec3} then details the components of the proposed analytical framework, including the scan statistics and their theoretical properties, an iterative algorithm for identifying multiple frequency bands, and another test statistic to determine if the power spectrum within a frequency band is stationary with respect to time.  Section \ref{sec:sec4} contains simulation results to evaluate empirical performance in estimating the number of frequency bands and their endpoints and provides an application to frequency band analysis of EEG data for an individual alternating between eyes open (EO) and eyes closed (EC) resting states.  Section \ref{sec:discussion} offers a discussion of the results and concluding remarks.  Proofs  for all theoretical results are provided in the Appendix of this paper.

\section{Nonstationary Functional Time Series in Frequency Domain} 
\def\theequation{2.\arabic{equation}}
\setcounter{equation}{0}
\label{sec2}
\subsection{Notation and Functional Set-up}
We begin by introducing some notation and results from functional analysis used in the paper. We typically assume observations are elements of a separable Hilbert space, $H$.  That is, $H$ is equipped with an inner product $\langle \cdot, \cdot \rangle$ and an associated norm $\|\cdot\|$ defined as $\|x \| = \langle x,x \rangle$ for $x \in H$. Let $\mathcal{L}$ be the space of bounded linear operators on $H$ with the norm
\begin{align*}
    \| \Phi \|_\mathcal{L} = \sup \{ \|\Phi(x)\| : \|x\| \leq 1 \}.
\end{align*}
An operator $\Phi \in \mathcal{L}$ is compact if there exists two orthonormal bases, $\{v_j\}$ and $\{f_j\}$, and a real sequence converging to zero, $\{\lambda_j\}$, such that for $x \in H$,
\begin{align*}
    \Phi(x) = \sum_{j=1}^\infty \lambda_j \langle x, v_j \rangle f_j.
\end{align*}
This representation is called the singular value decomposition. A compact operator with such a representation is said to be a Hilbert-Schmidt operator if $\sum_{j=1}^\infty \lambda_j^2 < \infty$. The space $\mathcal{S}$ of Hilbert-Schmidt operators is itself a separable Hilbert space with inner product
\begin{align*}
    \langle \Phi_1 , \Phi_2 \rangle_\mathcal{S} = \sum_{i=1}^\infty \langle \Phi_1(e_i),\Phi_2(e_i) \rangle
\end{align*}
where $\{ e_i\}$ is an arbitrary orthonormal basis whose selection does not alter the value of $\langle \Phi_1 , \Phi_2 \rangle_\mathcal{S}$. One can show that $\|\Phi\|_\mathcal{S}^2 = \sum_{j=1}^\infty \lambda_j^2$ and $\|\Phi\|_\mathcal{L} \leq \|\Phi\|_\mathcal{S}$. An operator $\Phi \in \mathcal{L}$ is symmetric if $ \langle \Phi(x),y \rangle = \langle x,\Phi(y) \rangle$ for all $x,y \in H$ and positive semi-definite if $\langle \Phi(x),x \rangle \geq 0$ for all $x \in H$.

In particular, we are interested in the Hilbert space $L^2([0,1]^k,\mathbb{C})$, the set of complex-valued measurable functions $x$ defined on $[0,1]^k$ satisfying $\int_{[0,1]^k} \vert x^2(u)\vert du < \infty$, equipped with the inner product
\begin{align*}
    \langle x,y\rangle  = \int_{[0,1]^k} x(t)\overline{y(t)} dt,
\end{align*}
where $\overline{y}$ denotes the complex conjugate of $y \in \mathbb{C}$.
Note for $x,y \in L^2([0,1]^k,\mathbb{C})$, we denote by $x=y$ the fact that $\|x-y\|= \int_{[0,1]^k} |x(t) - y(t)|^2 dt = 0$. The space of square integrable real valued functions $L^2([0,1]^k,\mathbb{R})$ is defined similarly.

An important class of integral operators on $L^2([0,1]^k,\mathbb{C})$ are those defined by
\begin{align*}
    \Phi[x](t) = \int_{[0,1]^k} \phi(t,s)x(s)ds
\end{align*}
for $x \in L^2([0,1]^k,\mathbb{C})$ with  kernel function $\phi \in L^2\left([0,1]^k\times[0,1]^k,\mathbb{C}\right)$. These operators are Hilbert-Schmidt if and only if $\int\int \vert \phi^2(s,t)\vert ds dt < \infty,$ in which case,
\begin{align*}
    \|\Phi\|_\mathcal{S}^2 = \int_{[0,1]^k} \int_{[0,1]^k} \vert \phi^2(s,t)\vert ds dt. 
\end{align*}
 If $\phi(s,t) = \phi(t,s)$ and $\int_0^1 \int_0^1 \phi(s,t)^2 ds dt \geq 0$, the integral operator $\Phi$ is symmetric and positive semi-definite.

\subsection{Power Spectrum of Locally Stationary Functional Time Series}
Let $\{X_h\}_{h \in \mathbb{Z}}$ be a weakly stationary functional time series  such that $X_t$ is a random element of $L^2([0,1],\R)$ for each $t\in\mathbb Z$  with expectation $\mu$ and auto-covariance kernel $r_h \in L([0,1]^2,\mathbb{R})$ at lag $h$ defined as
$$r_h(s,t) = \E\left((X_h(t)-\mu(t))(X_0(s)-\mu(s))\right),~~~~~~~~~t,s \in [0,1].$$
Let $\mathcal{R}_h$ be the corresponding autocovariance operator, i.e., the integral operator induced by the autocovariance kernel $r_h$.
The second order dynamics of this time series  can be completely described by the spectral density operator,  defined as the Fourier transform of $\{\mathcal{R}_h\}_{h\in\mathbb{Z}}$, acting on $L^2([0,1], \mathbb C)$, i.e., 
\[
\F_{\omega} = \sum_{h \in \mathbb{Z}} \mathcal{R}_h e^{-\im 2\pi \omega h} \quad \omega \in (0,0.5) \tageq \label{eq:statspo}.
\]
We will assume our data are centered and hence $\mu = 0$. When stationarity is violated, we can no longer define the spectral density in this manner for all time points. To meaningfully model nonstationarity, we consider
a triangular array $\{X_{t,T}:1\le t\le T\}_{T\in\mathbb N}$ as a doubly-indexed functional time series, where $X_{t,T}$ is a random element with values in $L^2([0,1],\R)$ for each $1\le t\le T$ and $T\in\mathbb N$. The processes $\{X_{t,T}:1\le t\le T\}$ are extended on $t \in \znum$ by setting $\XT{t} = \XT{1}$ for $t <1$ and $\XT{t} = \XT{T}$ for $t >T$. The sequence of stochastic processes $\{\XT{t}:t\in\mathbb Z\}$ indexed by $T\in\nnum$  is called {\it locally stationary} if for all rescaled times $u\in[0,1]$, there exists an $L^2([0,1],\R)$-valued strictly stationary process $\{X^{(u)}_t:t\in\mathbb Z\}$ such that
\begin{equation}\label{Local stationarity}
\Bigl\|\XT{t}-\Xu{t}\Bigr\|_{2}
\leq\big(\big|\tfrac{t}{T}-u\big|+\tfrac{1}{T}\big) \,P_{t,T}^{(u)}\qquad a.s.
\end{equation}
for all $1\leq t\leq T$, where $P_{t,T}^{(u)}$ is a positive real-valued process such that for some $\rho>0$ and $C<\infty$ the process satisfies $\E\big(\big|P_{t,T}^{(u)}\big|^\rho\big)<C$ for all $t$ and $T$ and uniformly in $u\in[0,1]$. If the second-order dynamics change gradually over time, the second order dynamics of the stochastic process $\{\XT{t}:t\in\mathbb Z\}_{T\in\mathbb N}$ are  completely described by the {\em time-varying spectral density kernel} given by
\begin{align}\label{eq:spdens}
f_{u,\omega}(\tau,\sigma) = \sum_{h\in \mathbb{Z}}
\E\left(X_{t+h}^{(u)}(\tau)X_t^{(u)}(\sigma)\right)
e^{-\im 2\pi \omega h},~~~~~~~ \omega \in (0,1/2).
\end{align}
The time varying spectral density operator $F_{u,\omega}$ is the operator induced by $f_{u,\omega}$ by right integration.
 Following the set-up of \citep{Aue_Delft}, in order to establish our asymptotic results, we impose the following technical assumptions on the time series under consideration. 

\begin{assumption}\label{cumglsp}
{Assume $\{\,X_{t,T} \colon t \in \znum\}_{T\in\nnum}$ is a locally stationary 
zero-mean stochastic process  and let $~\cumbp{k}{k}$ be a positive sequence in $L^2([0,1]^k,\rnum)$ independent of $T$ such that, for all $j=1,\ldots,k-1$ and some $\ell\in\nnum$,
\begin{align} 
\label{eq:kapmix}
\sum_{t_1,\ldots,t_{k-1} \in \mathbb{Z}} (1+|t_j|^\ell)\|\cumbp{k}{k}\|_2 <\infty.
\end{align}
Let us denote
\begin{equation}\label{eq:repstatap}
Y^{(T)}_{t}=\XT{t}-X_{t}^{(t/T)}
\qquad\text{and}\qquad
Y_{t}^{(u,v)}=\frac{\Xu{t}- X^{(v)}_{t}}{(u-v)}
\end{equation}
for $T\ge1$, $1\le t\le T$ and $u,v\in[0,1]$ such that $u\ne v$. Suppose furthermore that $k$-th order joint cumulants satisfy
\begin{enumerate}[(i)]\itemsep-.3ex
\item$\|\cm(\XT{t_1},\ldots,\XT{t_{k-1}},Y^{(T)}_{t_k}) \|_2 \le \frac{1}{T}\|\kappa_{k;t_1-t_k,\ldots,t_{k-1}-t_k}\|_2 $,
\item$\|\cm(X_{t_1}^{(u_1)},\ldots,X_{t_{k-1}}^{(u_{k-1})},Y_{t_{k}}^{(u_k,v)}) \|_2 \le \|\kappa_{k;t_1-t_k,\ldots,t_{k-1}-t_k}\|_2 $,
\item$\sup_u \|\cm(X_{t_1}^{(u)},\ldots,X_{t_{k-1}}^{(u)},X_{t_k}^{(u)}) \|_2 \le \|\kappa_{k;t_1-t_k,\ldots,t_{k-1}-t_k}\|_2$,
\item{$\sup_u \|\frac{\partial}{\partial u} \cm(X_{i,t_1}^{(u)},\ldots,X_{i,t_{k-1}}^{(u)},X_{i,t_k}^{(u)}) \|_2 \le \|\kappa_{k;t_1-t_k,\ldots,t_{k-1}-t_k}\|_2$.}
\end{enumerate}}
\end{assumption}
Under these assumptions, the spectral density operator $\mathcal{F}_{u,\omega}$ is a Hilbert-Schmidt operator and the spectral density kernel $f_{u,\omega} \in L^2\left([0,1]^2,\mathbb{C}\right)$ is twice-differentiable with respect to  $u$ and $\omega$.

We consider the demeaned power spectrum,
\begin{equation}
    g_{u,\omega}(\tau,\sigma) := f_{u,\omega}(\tau,\sigma) - \int_0^1 f_{u,\omega}(\tau,\sigma) du,
\end{equation}
and equivalently $\mathcal{G}_{u,\omega} = \mathcal{F}_{u,\omega} - \int_0^1 \mathcal{F}_{u,\omega} du.$
We assume $g$ admits a partition of the frequency space such that
\begin{equation}\label{eq:thdemps}
    g_{u,\omega}(\tau,\sigma) = \left\{
    \begin{array}{ll}
    g^{(1)}_{u}(\tau,\sigma) &~~\text{for} ~\omega \in (0,\omega_1),\\
    g^{(2)}_{u}(\tau,\sigma) &~~\text{for} ~\omega \in [\omega_1,\omega_2),\\
    ~~~~\vdots\\
    g^{(p)}_{u}(\tau,\sigma) &~~\text{for} ~\omega \in [\omega_{p-1},0.5).
    \end{array}
    \right.
\end{equation}
Equivalently, the spectral density operator $\mathcal{G}$ admits a partition
\begin{equation}
    \mathcal{G}_{u,\omega} = \left\{
    \begin{array}{ll}
    \mathcal{G}^{(1)}_{u} &~~\text{for} ~\omega \in (0,\omega_1),\\
    \mathcal{G}^{(2)}_{u} &~~\text{for} ~\omega \in [\omega_1,\omega_2),\\
    ~~~~\vdots\\
    \mathcal{G}^{(p)}_{u} &~~\text{for} ~\omega \in [\omega_{p-1},0.5).
    \end{array}
    \right.
\end{equation}
We want to estimate the number of segments in the partition $p$ and the associated partition points $(\omega_1,\omega_2,\dots,\omega_p)$ of the frequency space.

\section{Empirical Band Analysis} \label{sec3}
\def\theequation{3.\arabic{equation}}
\setcounter{equation}{0}
\subsection{Estimation of the Power Spectrum and Proposed Statistic}\label{sec3.1}
We start by approximating the time-varying  power spectrum by multitaper local periodograms based on the data. We consider $B$ equally-sized non-overlapping temporal blocks. For the sake of notational convenience, suppose that $T$ is a multiple of $B$, and let $T_B=T/B$ be the number of observations in each temporal block. 

For $b = 1,2,\dots,B$, $\omega\in(0,0.5)$, $k=1,2, \dots, K$ and $T\ge1$, the functional discrete Fourier transform (fDFT) is defined as a random function with values in $L^2([0,1],\mathbb C)$ given by
\begin{equation}\label{eq:fDFT}
	\widetilde{X}_T^{(k),b,\omega}
    :=\sum_{t=1}^{T}v_b^k(t) {I}_b(t/T)X_{t,T}e^{-\im 2\pi \omega t},
\end{equation}
where $v_b^k:\{1,2,\dots,T\}\mapsto \rnum$ is the $k$-th data taper for the $b$-th time segment.  We propose the use of sinusoidal tapers of the form
\begin{equation}
\label{eqn:sinetapers}
v_{b}^k(t) = \sqrt{\frac{2}{T_B+1}}\sin\frac{\pi k [t-(b-1)T_B]}{T_B+1}, \; \mathrm{for} \; k=1,\ldots,K,
\end{equation}
which are orthogonal for $t \in [(b-1)T_B+1,\ldots,bT_B]$.  Sinusoidal tapers are more computationally efficient than the Slepian tapers proposed in \citep{Thomson1982}, which require numerical eigenvalue decomposition to construct the tapers, and can achieve similar spectral concentration with significantly less local bias \citep{RidelandSidorenko1995}.  Another advantage is that the bandwidth, $bw$, which is the minimum separation in frequency between approximately uncorrelated spectral estimates, can be fully determined by setting the number of tapers, $K$, appropriately \citep{Waldenetal1995}.  More specifically, 
\begin{equation}
\label{eqn:bwsinetapers}
bw = \frac{K+1}{T_B+1}.
\end{equation}
The k-th single local periodogram kernel is then defined by
$$\widehat{f}_{b,\omega}^{(k)}(\tau,\sigma) = \widetilde{X}_T^{(k),b,\omega}(\tau)\widetilde{X}_T^{(k),b,-\omega}(\sigma)= \widetilde{X}_T^{(k),b,\omega}(\tau)\overline{\widetilde{X}_T^{(k),b,\omega}(\sigma)}.$$

The multitaper estimator of the local power spectrum within the $b$-th block is then defined to be the average of all $K$ single taper estimators
\begin{eqnarray}
        \widehat{f}_{b,\omega}^{(mt)}(\tau,\sigma) = \frac{1}{K}\sum_{k=1}^K \widehat{f}_{b,\omega}^{(k)}(\tau,\sigma).
\end{eqnarray}
The final estimator of the time-varying power spectrum is then given by
\begin{eqnarray}\label{eq:multpdg}
    \widehat{f}_{u,\omega}^{(mt)}(\tau,\sigma) = \sum_{b=1}^B I_b(u) \widehat{f}_{b,\omega}^{(mt)}(\tau,\sigma).
\end{eqnarray}
In practice, this estimator requires appropriate selection of two tuning parameters: $B$ and $K$.  The number of time blocks, $B$, balances frequency and temporal properties.   It should be selected small enough to ensure sufficient frequency resolution and so the central limit theorem holds for the local tapered periodgrams.  It should be selected large enough so that data within each block are approximately stationary, which depends on the signal under study.  
Asymptotic results presented in the next section indicate the optimal rate for $B$ is $T^{1/2}$.   In the absence of scientific guidelines, we recommend selecting $B$ as the factor of $T$ closest to $T^{1/2}$.   The number of tapers, $K$, controls the smoothness of local spectral estimates.   Asymptotic results presented in the next section indicate an appropriate choice would be $K=\max \left(1,\left \lfloor \min(T_B,B)^{1/2}\right \rfloor\right)$.


We further define the estimated demeaned time-varying spectra as
\begin{equation}\label{eq:demps}
    \widehat{g}_{b/B,\omega}(\tau,\sigma) = \widehat{f}_{b/B,\omega}^{(mt)}(\tau,\sigma) - \frac{1}{B}\sum_{l=1}^B \widehat{f}_{l/B,\omega}^{(mt)}(\tau,\sigma).
\end{equation}
Let $\widetilde{g}_{u,\omega_0,\delta}$ be the average of demeaned time-varying spectrum for $\omega \in [\omega_0,\omega_0+\delta).$
In order to identify frequency partition points, we consider an integrated scan statistic defined as
\begin{align}
    Q_{\omega_0,\delta} &~=~  \displaystyle\sum_{b=1}^B \int_0^1 \int_0^1 \left(\widehat{g}_{b/B,\omega_0+\delta}(\tau,\sigma)- \widetilde{g}_{b/B,\omega_0,\delta}(\tau,\sigma)\right)^2d\tau d\sigma.
\end{align}

\subsection{Asymptotic Properties of the Scan Statistic}
The basic intuition behind this scan statistic is that the estimator $\widehat{g}$ defined in \eqref{eq:demps} asymptotically behaves like the theoretical demeaned power spectrum defined in \eqref{eq:thdemps}. The scan statistic compares the value of $\widehat{g}$ at $\omega_0 + \delta$ with the average of the same quantity over the interval $(\omega_0,\omega_0 + \delta)$. Hence, if one of the partition points $\omega_k$ defined in \eqref{eq:thdemps} is present in the interval $(\omega_0,\omega_0 + \delta)$, the scan statistic should take a large value, and  it should be bounded and close to zero otherwise. For rest of the paper, we will consider the following asymptotic scheme.

\begin{assumption}
\label{assymp}
Assume $T \to \infty$, $B \to \infty$, $K \to \infty$, $T/K^2 \to \infty$ and $T/BK \to \infty,$
\end{assumption}

Our first result characterizes the asymptotic behavior of the estimated demeaned power spectrum.

\begin{lemma}\label{lem:ghat}
Under \autoref{cumglsp} and \autoref{assymp}, for every  $b, b_1, b_2\in \{1,2,\dots,B\}, \omega \in (0,0.5)$, and $\tau , \tau_1, \tau_2, \sigma, \sigma_1, \sigma_2 \in [0,1]$,
\begin{align}\label{ghat}
    &\mathbb{E}\left(\widehat{g}_{b/B,\omega}(\tau,\sigma)\right) = g_{u_b,\omega}(\tau,\sigma) + O\left(\log(T_B)/T_B \right) + \log \left(1/T \right) \nonumber\\
    &\text{Cov}\left(\widehat{g}_{b/B,\omega_1}(\tau_1,\sigma_1),\widehat{g}_{b/B,\omega_2}(\tau_2,\sigma_2) \right)=  \frac{(1-2/B)}{K}F(u_b,\omega_1,\omega_2,\tau_1,\sigma_1,\tau_2,\sigma_2)\nonumber\\
    & ~~ \hspace{2.25 in}+ \frac{1}{B^2K}\sum_{l=1}^B F(u_l,\omega_1,\omega_2,\tau_1,\sigma_1,\tau_2,\sigma_2) + o(1).\nonumber\\
    &\text{Cov}\left(\widehat{g}_{b_1/B,\omega_1}(\tau_1,\sigma_1),\widehat{g}_{b_2/B,\omega_2}(\tau_2,\sigma_2) \right)=  -\frac{1}{BK}F(u_{b_1},\omega_1,\omega_2,\tau_1,\sigma_1,\tau_2,\sigma_2)\nonumber\\
   &~~~~~~~~~~~~~~~~~~~~ - \frac{1}{BK} F(u_{b_2},\omega_1,\omega_2,\tau_1,\sigma_1,\tau_2,\sigma_2)\nonumber + \frac{1}{B^2K}\sum_{l=1}^B F(u_{b_l},\omega_1,\omega_2,\tau_1,\sigma_1,\tau_2,\sigma_2) + o(1),
\end{align}
where
$$F(u,\omega_1,\omega_2,\tau_1,\sigma_1,\tau_2,\sigma_2) := f_{u,\omega_1}(\tau_1,\tau_2)f_{u,\omega_2}(\sigma_1,\sigma_2) + f_{u,\omega_2}(\tau_1,\sigma_2)f_{u,\omega_2}(\tau_2,\sigma_1),$$
and $u_b$ is the mid-point of the $b$-th block. 
\end{lemma}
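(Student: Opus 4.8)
The plan is to reduce all three displayed identities to the first two moments of the \emph{raw} multitaper local periodogram $\widehat{f}^{(mt)}_{b,\omega}$ and then recover the statements for the demeaned estimator $\widehat{g}$ by exploiting the linearity of the demeaning operation. Concretely, since $\widehat{g}_{b/B,\omega} = (1-\tfrac1B)\,\widehat{f}^{(mt)}_{b/B,\omega} - \tfrac1B\sum_{l\ne b}\widehat{f}^{(mt)}_{l/B,\omega}$, it suffices to establish two facts: (i) $\E\big(\widehat{f}^{(mt)}_{b,\omega}(\tau,\sigma)\big) = f_{u_b,\omega}(\tau,\sigma) + O(\log(T_B)/T_B) + O(1/T)$; and (ii) for the same block $\text{Cov}\big(\widehat{f}^{(mt)}_{b,\omega_1}(\tau_1,\sigma_1),\widehat{f}^{(mt)}_{b,\omega_2}(\tau_2,\sigma_2)\big) = \tfrac1K F(u_b,\omega_1,\omega_2,\tau_1,\sigma_1,\tau_2,\sigma_2) + o(1/K)$, while periodograms from two distinct blocks are asymptotically uncorrelated. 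Given (i)--(ii), all three claims follow by bilinearity of covariance and elementary bookkeeping of the $\tfrac1B$ weights.

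For the mean in (i), I would expand the single-taper kernel $\widehat{f}^{(k)}_{b,\omega}(\tau,\sigma)=\widetilde{X}_T^{(k),b,\omega}(\tau)\overline{\widetilde{X}_T^{(k),b,\omega}(\sigma)}$ using \eqref{eq:fDFT}, take expectations to introduce the autocovariance kernels $\E(X_{t,T}(\tau)X_{s,T}(\sigma))$, and replace $X_{t,T}$ by the stationary approximation $X_t^{(u_b)}$ within the $b$-th block via the local-stationarity bound \eqref{Local stationarity} together with Assumption \ref{cumglsp}(i); this exchange contributes the $O(1/T)$ term. After the substitution, the double sum over $t,s$ becomes, upon the change of variables $h=t-s$ and the spectral representation $c_{u_b;h}=\int_0^{1} f_{u_b,\lambda}e^{\im 2\pi\lambda h}\,d\lambda$, a convolution of $f_{u_b,\cdot}$ against the taper spectral window. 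Averaging over $k=1,\dots,K$ and using the spectral concentration of the sinusoidal tapers together with the $C^2$-smoothness of $f$ in $\omega$ noted after Assumption \ref{cumglsp} yields the main term $f_{u_b,\omega}$ with bias of order $\log(T_B)/T_B$, the logarithm and the bandwidth $(K+1)/(T_B+1)$ entering through the sidelobe behavior of the window.

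The heart of the proof, and the step I expect to be the main obstacle, is the covariance in (ii). Writing the product of two multitaper kernels as $\tfrac{1}{K^2}\sum_{k_1,k_2}$ of products of four fDFTs, I would apply the cumulant (Isserlis-type) expansion to split each summand into products of second-order covariances plus a genuine fourth-order cumulant term. The fourth-order cumulant term is shown to be negligible using Assumption \ref{cumglsp}(iii) and the summability \eqref{eq:kapmix}. The crucial point is the behavior of the second-order fDFT covariances: I would show that the \emph{cross-taper} covariances with $k_1\ne k_2$ vanish asymptotically, owing to the orthogonality of the sinusoidal tapers \eqref{eqn:sinetapers} combined with their spectral concentration, so that only the $K$ diagonal terms $k_1=k_2$ survive in the double sum. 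This collapses $\tfrac{1}{K^2}\sum_{k_1,k_2}$ to order $\tfrac1K$ with limiting constant $F$, its two summands corresponding to the two ways of pairing the conjugated and unconjugated fDFTs. Establishing this decorrelation uniformly in $\omega_1,\omega_2$ and the spatial arguments is the delicate estimate, and it is what relies on the condition $T/(BK)\to\infty$ in Assumption \ref{assymp}, which forces the taper count to be small relative to the block length $T_B$.

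Finally, for the cross-block behavior I would argue that $\text{Cov}(\widehat{f}^{(mt)}_{b_1,\omega_1},\widehat{f}^{(mt)}_{b_2,\omega_2})$ for $b_1\ne b_2$ is negligible: the two fDFTs are built from disjoint, temporally separated time spans (the indicators $I_{b_1},I_{b_2}$ have disjoint support), so their covariance involves only autocovariances at large lags and is absorbed into the $o(1)$ remainder via \eqref{eq:kapmix}. Combining this with (i)--(ii) and expanding $\text{Cov}(\widehat{g}_{b_1/B,\omega_1},\widehat{g}_{b_2/B,\omega_2})$ by bilinearity, the cross terms between distinct blocks drop out and the only surviving contributions are the within-block variances $\tfrac1K F$ weighted by the demeaning coefficients. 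For $b_1=b_2=b$ this yields $(1-\tfrac2B)\tfrac1K F(u_b)+\tfrac{1}{B^2K}\sum_l F(u_l)$, and for $b_1\ne b_2$ it yields $-\tfrac{1}{BK}F(u_{b_1})-\tfrac{1}{BK}F(u_{b_2})+\tfrac{1}{B^2K}\sum_l F(u_l)$, matching the stated identities; the mean identity follows analogously, since demeaning subtracts the frequency-block average of the means.
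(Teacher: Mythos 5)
Your proposal follows essentially the same route as the paper: the paper proves the lemma by first establishing the mean and covariance of the raw multitaper estimator $\widehat{f}^{(mt)}_{b,\omega}$ (its Lemma on the multitaper periodogram, obtained via the within-block stationary approximation, cumulant expansions in the style of Brillinger, and the orthogonality identity $H_{k,l}(0)=0$ for $k\neq l$ that kills cross-taper terms and produces the $1/K$ rate), and then transfers these to $\widehat{g}$ by the same bilinearity bookkeeping of the demeaning weights that you carry out. Your decomposition, the treatment of the fourth-order cumulant term, the cross-block negligibility, and the final $(1-2/B)$ and $-1/B$ coefficients all match the paper's argument.
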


\begin{Remark}
Note that the assumptions $T/BK \to \infty$ and $K \to \infty$ together guarantee $T_B \to \infty$. Therefore \autoref{lem:ghat} implies that $\mathbb{E}\left(\widehat{g}_{b/B,\omega}(\tau,\sigma)\right) \to g_{u_b,\omega}(\tau,\sigma)$ under our asymptotic scheme. Moreover, the covariance kernel of the process $\widehat{g}$ is $O(1/K)$. This guarantees for fixed $b$ and $\omega$, $\widehat{g}_{b/B,\omega}$ consistently estimates $g_{u_b,\omega}$. A stronger uniform result on $b$ and $\omega$ can be established under additional moment assumptions. Additionally, the covariance across different time blocks is of the order $O(1/BK)$, which converges to $0$ faster than the covariance kernel within the same block.
\end{Remark}

Next, we investigate the asymptotic behavior of the scan statistic itself. The next two theorems describe the asymptotic behavior of the scan statistic under the absence and presence of a jump point in the interval $(\omega_0,\omega_0+\delta)$ respectively.

\begin{thm}
\label{thm:null}
Assume $g_{u,\omega}(\tau,\sigma) = g^{(0)}_{u}(\tau,\sigma)$ for all $\tau,\sigma \in [0,1]$ and $\omega \in [\omega_0,\omega_0+\delta].$ Under \autoref{cumglsp} and \autoref{assymp},
$$Q_{\omega_0,\delta}  \stackrel{d}{=}\frac{1}{K}\sum_{b=1}^B \left( \| \mathcal{G}_b\|_2^2 + o_p(1)\right),$$  
where $\{\mathcal{G}_b\}$ is a collection of zero-mean Gaussian processes in   $L^2 \left([0,1]^2\right)$ with  covariance structure given in \eqref{var:diagonal} and \eqref{var:offdiag}.
\end{thm}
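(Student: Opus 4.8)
The plan is to reduce the scan statistic to a sum of squared $L^2$-norms of asymptotically Gaussian objects and then invoke a continuous-mapping argument. Write $\mathcal{D}_b := \sqrt{K}\bigl(\widehat{g}_{b/B,\omega_0+\delta}-\widetilde{g}_{b/B,\omega_0,\delta}\bigr)$, so that $Q_{\omega_0,\delta}=\tfrac{1}{K}\sum_{b=1}^B\|\mathcal{D}_b\|_2^2$ exactly. First I would use the null hypothesis: since $g_{u,\omega}=g^{(0)}_u$ does not depend on $\omega$ on $[\omega_0,\omega_0+\delta]$, both $\widehat{g}_{b/B,\omega_0+\delta}$ and the frequency-average $\widetilde{g}_{b/B,\omega_0,\delta}$ estimate the same kernel $g^{(0)}_{u_b}$. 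By the mean expansion in \autoref{lem:ghat}, the two bias terms agree to leading order and cancel in the difference, so $\mathbb{E}\,\mathcal{D}_b$ is $o(1)$ after the $\sqrt{K}$ inflation given the rate conditions in \autoref{assymp}; hence $\mathcal{D}_b$ is asymptotically centered.

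The core step is a functional central limit theorem for the (centered, $\sqrt{K}$-scaled) multitaper periodogram in the Hilbert--Schmidt space $L^2([0,1]^2,\mathbb{C})$. Here I would exploit local stationarity to replace $X_{t,T}$ by the stationary surrogate $X_t^{(u_b)}$ within block $b$, the replacement error being controlled by \eqref{Local stationarity} together with parts (i)--(ii) of \autoref{cumglsp}. For the surrogate, the tapered fDFTs $\{\widetilde{X}_T^{(k),b,\omega}\}_{k=1}^K$ are, by the orthogonality of the sinusoidal tapers, asymptotically jointly complex Gaussian elements of $L^2([0,1],\mathbb{C})$ with covariance operator $\mathcal{F}_{u_b,\omega}$; the single-taper periodograms $\widehat{f}^{(k)}_{b,\omega}$ are the associated rank-one outer products, and $\widehat{f}^{(mt)}_{b,\omega}$ is their average over the asymptotically independent tapers. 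I would establish Gaussianity of $\mathcal{D}_b$ by the method of cumulants: the second-order structure converges to the covariance kernel built from $F$ in \autoref{lem:ghat} (yielding \eqref{var:diagonal} and \eqref{var:offdiag}), while all joint cumulants of order $\ge 3$ vanish as $K\to\infty$, the relevant bounds following from the summability conditions \eqref{eq:kapmix} in \autoref{cumglsp} combined with taper orthogonality. Tightness in $L^2([0,1]^2)$ would follow from uniform $L^2$-bounds on the periodogram kernels. This gives joint weak convergence $(\mathcal{D}_1,\dots,\mathcal{D}_B)\Rightarrow(\mathcal{G}_1,\dots,\mathcal{G}_B)$, the cross-block covariances being of order $1/B$ relative to the diagonal by \autoref{lem:ghat}, so that the blocks are asymptotically decoupled.

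Finally, because the squared-norm functional $h\mapsto\|h\|_2^2$ is continuous on $L^2([0,1]^2)$, the continuous-mapping theorem gives $\|\mathcal{D}_b\|_2^2\Rightarrow\|\mathcal{G}_b\|_2^2$ block by block; collecting the blockwise approximation errors into the $o_p(1)$ term and summing yields the claimed representation $Q_{\omega_0,\delta}\stackrel{d}{=}\tfrac{1}{K}\sum_{b=1}^B(\|\mathcal{G}_b\|_2^2+o_p(1))$.

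I expect the main obstacle to be the functional CLT of the second paragraph. Unlike the univariate case, where the multitaper periodogram is asymptotically $\chi^2$, the limit here is an infinite-dimensional Gaussian object, so convergence of finite-dimensional distributions is insufficient and tightness in the Hilbert--Schmidt norm must be verified separately. The delicate point is controlling the higher-order cumulants of the averaged tapered periodograms uniformly while simultaneously letting $K\to\infty$ and $T_B\to\infty$ at the rates permitted by \autoref{assymp}, and ensuring that the stationary-approximation and taper-leakage errors remain negligible after the $\sqrt{K}$ inflation and are uniform enough across the growing number of blocks $B$ to be absorbed into the stated $o_p(1)$.
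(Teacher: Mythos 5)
Your proposal is correct and follows essentially the same route as the paper: the paper likewise reduces $Q_{\omega_0,\delta}$ to $\tfrac{1}{K}\sum_b$ of squared, $\sqrt{K}$-scaled differences, kills the $O(\log T_B/T_B)+O(1/T)$ bias terms under \autoref{assymp}, proves joint asymptotic Gaussianity of the centered multitaper periodograms by showing all cumulants of order $\ge 3$ vanish (its Theorem~\ref{clt_multpdg}), verifies $L^2([0,1]^2)$ tightness separately (via the fDFT cumulant bounds and a result of Cremers and Kadelka), and concludes by continuous mapping. The only cosmetic difference is that the paper passes through the processes $E_{b,j}=\sqrt{K}(\hat f^{(mt)}_{b,\omega_j}-f_{u_b,\omega_j})$ and applies the linear contrast $x_{L_\delta+1}-\tfrac{1}{L_\delta}\sum_i x_i$ afterward, whereas you fold that contrast into $\mathcal{D}_b$ from the start.
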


By Lemma \ref{lem:gb}, the quantity $\frac{1}{B}\sum_b\| \mathcal{G}_b\|^2 = O_p(1)$ as $B \to \infty$. Therefore, \autoref{thm:null} suggests that the quantity $K/B Q_{\omega_0,\delta}$ is asymptotically bounded. Hence, as long as the last quantity diverges under our asymptotic set-up, we can construct a consistent test for the existence of a partition point in frequency domain. The next theorem guarantees that is indeed the case.

\begin{thm}
\label{thm:alt}
Assume that
\begin{align*}
    g_{u,\omega}(\tau,\sigma) = \left\{\begin{array}{ll}
       g^{(1)}_{u}(\tau,\sigma)  & \text{for } \omega \in [\omega_0,\omega^*) \\
       g^{(2)}_{u}(\tau,\sigma)  & \text{for } \omega \in [\omega^*,\omega_0+\delta].
    \end{array}\right.
    \end{align*}
  Under \autoref{cumglsp} and \autoref{assymp},
$$Q_{\omega_0,\delta} \stackrel{d}{=}\frac{1}{K}\sum_{b=1}^B \left( \| \mathcal{G}_b\|_2^2 + o_p(1)\right) + B\left(\frac{\omega^* - \omega_0}{\delta}\right)^2\int_{0}^1\int_0^1 \int_0^1 \left(g^{(1)}_{u}(\tau,\sigma) - g^{(2)}_{u}(\tau,\sigma)\right)^2 d\tau d\sigma du + O_p(B/K),$$
where $\{\mathcal{G}_b\}$ are as defined in Theorem \ref{thm:null}.
\end{thm}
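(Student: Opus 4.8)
The plan is to isolate the deterministic signal created by the jump at $\omega^*$ and to show that, apart from this signal, the statistic behaves exactly as in the null case of \autoref{thm:null}. Writing the estimation error as $Z_{b,\omega}:=\widehat{g}_{b/B,\omega}-g_{u_b,\omega}$ and letting an overbar denote the frequency average over $[\omega_0,\omega_0+\delta)$ that defines $\widetilde{g}$, I would decompose the integrand of $Q_{\omega_0,\delta}$ as
\[
\widehat{g}_{b/B,\omega_0+\delta}-\widetilde{g}_{b/B,\omega_0,\delta}=\underbrace{\big(g_{u_b,\omega_0+\delta}-\overline{g}_{u_b}\big)}_{=:S_b}+\underbrace{\big(Z_{b,\omega_0+\delta}-\overline{Z}_b\big)}_{=:W_b},
\]
so that $Q_{\omega_0,\delta}=\sum_b\|S_b\|_2^2+2\sum_b\langle S_b,W_b\rangle_2+\sum_b\|W_b\|_2^2$. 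The first term is the new deterministic signal, the last term is precisely the quantity analyzed under the null, and the middle is a cross term to be controlled.

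For the signal term I would use the piecewise-constant form of $g$ in \eqref{eq:thdemps}: since $\omega_0+\delta\ge\omega^*$ we have $g_{u_b,\omega_0+\delta}=g^{(2)}_{u_b}$, while the frequency average splits according to the proportion of Fourier frequencies on each side of $\omega^*$, giving $\overline{g}_{u_b}=\tfrac{\omega^*-\omega_0}{\delta}g^{(1)}_{u_b}+\tfrac{\omega_0+\delta-\omega^*}{\delta}g^{(2)}_{u_b}$ up to an $O(1/N_\delta)$ counting error, where $N_\delta$ is the number of frequencies in the band. Hence $S_b=\tfrac{\omega^*-\omega_0}{\delta}\big(g^{(2)}_{u_b}-g^{(1)}_{u_b}\big)+O(1/N_\delta)$. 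Summing the squared $L^2$ norms over $b$ and invoking the twice-differentiability of $g$ in $u$ (guaranteed under \autoref{cumglsp}) to pass from the Riemann sum $\tfrac1B\sum_b\|g^{(1)}_{u_b}-g^{(2)}_{u_b}\|_2^2$ to $\int_0^1\!\int_0^1\!\int_0^1(g^{(1)}_u-g^{(2)}_u)^2\,d\tau\,d\sigma\,du$ yields the stated term $B\big(\tfrac{\omega^*-\omega_0}{\delta}\big)^2\int\!\int\!\int(g^{(1)}-g^{(2)})^2$ up to lower-order error.

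For the noise term $\sum_b\|W_b\|_2^2$ I would observe that $W_b$ is exactly the centered fluctuation of $\widehat{g}_{b/B,\omega_0+\delta}-\widetilde{g}_{b/B,\omega_0,\delta}$ analyzed in \autoref{thm:null}; by \autoref{lem:ghat} its leading covariance is governed by the endpoint spectrum $f_{u_b,\omega_0+\delta}$ (the single-frequency contribution dominating the averaged one), which lies in the $g^{(2)}$ band. Consequently this term is distributionally equal to $\tfrac1K\sum_b(\|\mathcal{G}_b\|_2^2+o_p(1))$ with $\{\mathcal{G}_b\}$ as in \autoref{thm:null}, and no new argument is needed beyond re-reading that proof with the endpoint frequency placed in the $g^{(2)}$ region; the only point to verify is that the averaged part contributes negligibly to the covariance, so that whether those frequencies carry $g^{(1)}$ or $g^{(2)}$ does not alter the limit at leading order.

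The main obstacle is the cross term $2\sum_b\langle S_b,W_b\rangle_2$. Since $S_b$ is deterministic and uniformly bounded while $\mathbb{E}[W_b]$ is the bias difference of order $O(\log(T_B)/T_B)$ from \autoref{lem:ghat}, its mean is negligible relative to $B$. For the fluctuation I would compute the variance, whose dominant part is $4\sum_{b,b'}\mathrm{Cov}\big(\langle S_b,W_b\rangle_2,\langle S_{b'},W_{b'}\rangle_2\big)$, using the within-block $O(1/K)$ and between-block $O(1/BK)$ covariances of \autoref{lem:ghat}: the diagonal contributes $\sum_b\|S_b\|_2^2\,O(1/K)=O(B/K)$ and the off-diagonal contributes $O(B^2)\,O(1/BK)=O(B/K)$, so the cross term is $O_p(\sqrt{B/K})$, which together with the $O(1)$ Riemann and $O(1/N_\delta)$ discretization errors is absorbed into the $O_p(B/K)$ remainder under \autoref{assymp}. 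The delicate bookkeeping is confirming that these covariance bounds do not accumulate beyond $O(B/K)$ when summed across the $B^2$ block pairs, and that the order-$B$ signal swamps every remainder term, so that $\tfrac{K}{B}Q_{\omega_0,\delta}\to\infty$ and the test is consistent.
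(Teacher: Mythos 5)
Your proposal follows essentially the same route as the paper: you split $\widehat{g}_{b/B,\omega_0+\delta}-\widetilde{g}_{b/B,\omega_0,\delta}$ into a deterministic signal plus a centered fluctuation, compute the frequency average of $g$ under the alternative via the proportion of equally spaced frequencies on each side of $\omega^*$ to obtain the $B\left(\tfrac{\omega^*-\omega_0}{\delta}\right)^2\int\!\!\int\!\!\int\left(g^{(1)}_u-g^{(2)}_u\right)^2$ term, recycle the null analysis for the quadratic fluctuation, and control the cross term — which is exactly the paper's decomposition of $A_{b,K,T}$ into a centered square, a cross term, and a squared mean difference. Your variance bookkeeping for the cross term is in fact more explicit than the paper's, which simply asserts that the remaining steps mirror the proof of Theorem \ref{thm:null}.
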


\begin{Remark}
If there is no significant frequency partition point in $[\omega_0,\omega_0+\delta]$, we have
$$Q_{\omega_0,\delta} = O(B/K),$$
and in the presence of a change,
$$Q_{\omega_0,\delta} = O(B/K) + O(B).$$
Therefore, we expect to see a large spike in the scan statistic around the frequency where the dynamics of the power spectrum changes.
\end{Remark}

\begin{Remark}
Suppose we are interested in testing the null hypothesis that the kernel $g_{u,\omega}$ does not change in $\omega$ on the interval  $(\omega_0,\omega_0+\delta)$. Let $c_{\alpha}$ be the $(1-\alpha)$-th quantile of  $\frac{1}{B} \sum_{b=1}^B  \| \mathcal{G}_b\|_2^2 $ for large $B$. Then an asymptotic level $\alpha$ test for this hypothesis can be constructed by rejecting the null when $K Q_{\omega_0,\delta} > B c_{\alpha}$. In fact, the next Corollary suggests that this is a consistent test.
\end{Remark}
\begin{Corollary}
Under the assumptions of Theorem \ref{thm:alt}, we have $$P\left( Q_{\omega_0,\delta} > \frac{1}{K} \sum_{b=1}^B  \VERT \mathcal{G}_b\VERT_2^2\right) \to 1.$$
\end{Corollary}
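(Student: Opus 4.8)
The plan is to read the consistency statement directly off the distributional expansion of \autoref{thm:alt}, tracking the relative orders of the three terms that appear there. Write
$$c := \left(\frac{\omega^*-\omega_0}{\delta}\right)^2\int_0^1\int_0^1\int_0^1\left(g^{(1)}_u(\tau,\sigma)-g^{(2)}_u(\tau,\sigma)\right)^2\,d\tau\,d\sigma\,du,$$
and note that under the alternative $c>0$ strictly, since $\omega^*>\omega_0$ and the kernels $g^{(1)}$ and $g^{(2)}$ differ on a set of positive measure. By \autoref{thm:alt}, on the probability space supporting the coupling we may represent
$$Q_{\omega_0,\delta} = \frac{1}{K}\sum_{b=1}^B\|\mathcal{G}_b\|_2^2 + \frac{1}{K}\sum_{b=1}^B o_p(1) + Bc + O_p(B/K).$$

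First I would control the reference quantity and the remainders. Setting $W:=\frac{1}{K}\sum_{b=1}^B\|\mathcal{G}_b\|_2^2$ and invoking \autoref{lem:gb}, which gives $\frac{1}{B}\sum_{b=1}^B\|\mathcal{G}_b\|_2^2=O_p(1)$ as $B\to\infty$, we obtain $W=\frac{B}{K}\,O_p(1)=O_p(B/K)$. Since $K\to\infty$ under \autoref{assymp}, $B/K=o(B)$, so $W=o_p(B)$. The same rate governs the remainders in the expansion of $Q_{\omega_0,\delta}$: the residual $\frac{1}{K}\sum_{b=1}^B o_p(1)$ is $o_p(B/K)=o_p(B)$, and the explicit $O_p(B/K)$ term is likewise $o_p(B)$.

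Next I would normalize by $B$ and pass to the limit. Combining the above yields
$$\frac{1}{B}\left(Q_{\omega_0,\delta}-W\right)=c+o_p(1)\stackrel{p}{\to}c>0,$$
so that $P(Q_{\omega_0,\delta}-W>0)\to1$, which is precisely the claim. The argument is unchanged if $W$ is taken as an independent reference copy rather than the coupled one, since both $W$ and the stochastic null part of $Q_{\omega_0,\delta}$ share the $O_p(B/K)$ rate and are dominated by the deterministic drift $Bc$.

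The main obstacle I anticipate is the bookkeeping around the $\frac{1}{K}\sum_{b=1}^B o_p(1)$ term: one must verify that aggregating an $o_p(1)$ remainder across all $B$ blocks does not destroy its negligibility relative to the $Bc$ drift. This requires the remainder bounds from \autoref{thm:null}/\autoref{thm:alt} to hold uniformly (or at least in aggregate) over the blocks, so that $\frac{1}{K}\sum_{b=1}^B o_p(1)=o_p(B/K)$ and not merely $o_p(B)$. Given $K\to\infty$ this margin is more than enough to be swamped by $Bc$, but establishing the requisite uniformity in $b$ is the point that needs care.
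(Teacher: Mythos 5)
Your proposal is correct and follows essentially the same route the paper intends: the Corollary is left as an immediate consequence of Theorem \ref{thm:alt} together with Lemma \ref{lem:gb}, namely that the deterministic drift $Bc$ with $c>0$ dominates both the reference quantity $\frac{1}{K}\sum_{b=1}^B\|\mathcal{G}_b\|_2^2=O_p(B/K)=o_p(B)$ and the remainder terms, so the difference diverges in probability. Your caveat about the $o_p(1)$ remainders being uniform over blocks is also handled in the paper, where the proof of Theorem \ref{thm:null} notes explicitly that the order of those residuals does not depend on the block index $b$.
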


\subsection{An Iterative Algorithm}
The results discussed in the previous section provide a consistent test to find a single partition point in the frequency space. In this section, we extend this framework to detect multiple frequency partition points using an iterative search algorithm that uses this scan statistic to efficiently explore the frequency space and identify all frequency partition points. 

For computational efficiency, we search across frequencies in batches of size $n_{\max}$.  This limits the number of calculations required to approximate the null distribution of the scan statistic to avoid undue computational burden.  In particular,  $\omega_0$ is first fixed at a value near $0$, and we conduct a test for partition points in the interval $(\omega_0, \omega_0 +\delta]$ comparing the statistic $Q_{\omega_0,\delta}$ against the null distribution described in Theorem \ref{thm:null}  for different values of $\delta < n_{\max}/T_B$. A Hochberg step-up procedure \citep{hochberg1988} is then used to test for the presence of a partition in the frequency domain at each particular frequency $\omega_0+\delta$.  If the procedure returns a set of frequencies for which the null hypothesis (no partition point) is rejected, we then add the smallest frequency in this set to the estimated frequency band partition, increase $\omega_0$ to be just larger than this newly found frequency partition point, and repeat the process.  If the procedure does not return any frequencies for which the null hypothesis is rejected, we increase $\omega_0$ to be just larger than the largest frequency tested in the current batch and repeat the process.  The procedure continues until all frequencies have been evaluated as potential partition points.  To better visualize this procedure, consider that the traversal of this procedure across frequencies resembles the movement of an inchworm.  A complete algorithmic representation of the search procedure in its entirety is available in Algorithm \ref{algo:eba}. 

P-values are obtained by comparing observed test statistics with a simulated distribution of the limiting random variable described in \autoref{thm:null}. To simulate from the null distribution, we generate $d_0$ draws from the collection of zero-mean Gaussian processes $\{\mathcal{G}_b\}$  with covariance structure given in \eqref{var:diagonal} and \eqref{var:offdiag}.  The covariance is estimated using the estimates of the demeaned time-varying power spectrum based on the local multitaper power spectrum estimator $\hat{f}^{(mt)}_{u,\omega}$ described in Section \ref{sec3.1}.

\begin{algorithm}[ht!]

  \SetKwInOut{Input}{Input}
  \Input{Demeaned time-varying multitaper power spectrum estimates, $\hat{g}_{b/B,\omega_k}(\tau,\sigma)$, for $b=1,\ldots,B$ and $\omega_{k}=k/T_B$ for $k=1,\ldots,N_B=\lfloor T_B/2 \rfloor - 1$ \newline
  Number of tapers $K$, significance level $\alpha$,
  number of frequencies tested in each pass $n_{\max}$,
  number of draws for approximating $p$-values, $d_0$}
  
  \SetKwInOut{Output}{Output}
  \Output{Estimated number of partition points,  $\widehat{p}$ \newline
  Estimated partition points,  $\widehat{\boldsymbol{\omega}}_{\widehat{p}} =
    \{\widehat{\omega}_1,\widehat{\omega}_2,\ldots,\widehat{\omega}_{\widehat{p}-1}\}$}
    
  $\widehat{p} \gets 1$, $\widehat{\boldsymbol{\omega}}_{\widehat{p}} \gets \{\}$, $\mathrm{stop} \gets 0$, $\epsilon \gets \frac{K+1}{T_B+1}$,  $k^* \gets \left \lceil T_B \epsilon \right \rceil$, $\omega_0 \gets \omega_{k^*}$\;
  \While{$\mathrm{stop} \neq 1$}{

  
    $k_{\min} \gets \left \lceil T_B \epsilon \right \rceil$,
   $k_{\max} \gets \min \left(k_{\min}+n_{\max}-1, \left \lfloor N_B - k^{*}-T_B \epsilon \right \rfloor\right)$\;
   
  \vspace{1mm}
  
  Compute test statistics $Q_{\omega_0,\delta_k} \; \forall \; \delta_k = k/T_B, k \in \{k_{\min},k_{\min}+1,\ldots,k_{\max}\}.$\;
   
    \vspace{1mm}
    
Simulate $d_0$ draws, $\mathbf{Q}^{H_0}_{\omega_0,\delta_k}=\left\{Q^{H_0,i}_{\omega_0,\delta_k}\right\}_{i=1}^{d_0}$, from the limiting null distribution of $Q_{\omega_0,\delta_k}$. 
\;

  \vspace{1mm}
 Approximate $p$-values, $\hat{p}(k) = \frac{1}{d_0}\sum_{i=1}^{d_0} I \left(Q^{H_0,i}_{\omega_0,\delta_k} > Q_{\omega_0,\delta_k}\right)$
to test $H_{0}(k):g_{u,\omega}(\tau,\sigma)=g_u^{(0)}(\tau,\sigma)$ for $\omega \in [\omega_0,\omega_0+\delta_k]$.  \;

  
   \vspace{1mm}
  
    Identify $\mathbf{R_{\alpha}} = \{\omega_0+\delta_k:H_{0}(k) \; \mathrm{rejected}\}$ using level-$\alpha$ Hochberg step-up procedure.\;

\vspace{1mm}

    \uIf{$\mathbf{R_{\alpha}} = \{\}$ }{ $k^{*} \gets T_B (\omega_0 + \delta_{k_{\max}}), \; \omega_0 \gets k^{*}/N_B$\;}
    \Else{
    $\hat{\omega}^* \gets \min \mathbf{R_{\alpha}}, \; \widehat{\boldsymbol{\omega}}_{\widehat{p}} \gets \widehat{\boldsymbol{\omega}}_{\widehat{p}} \cup \{ \hat{\omega}^*\}, \; \widehat{p} \gets \widehat{p} + 1\;$
    
   $k^{*} \gets \left \lceil T_B (\hat{\omega}^* + \epsilon) \right \rceil, \; \omega_0 \gets k^{*}/N_B$\;} 

\lIf{$\omega_{0}> \omega_{N_B}-2 \epsilon$}{
  \vspace{1mm}
  stop $\gets 1$
}  
  
    }
\Return{$\widehat{p}, \widehat{\boldsymbol{\omega}}_{\widehat{p}}$}\;
\caption{{\sc Inchworm Frequency Band Search Algorithm}}
\label{algo:eba}    
\end{algorithm}

\noindent\begin{Remark}\label{rmk:comp}
Computation cost is a critical aspect for analyzing functional time series data. In order to make our algorithm more efficient, we use two particular adjustments
\begin{enumerate}[(i)]
    \item \textit{A block diagonal approximation of the asymptotic covariance structure:} While approximating the asymptotic quantiles by simulating $\{\mathcal{G}_b\}$, we ignore the covariance across different time blocks $b$ given by \eqref{var:offdiag}. This allows for simulating from a collection of $B$ independent, lower dimensional Gaussian processes, which can be carried out in parallel for computational efficiency.  Note that the covariance kernel for $\mathcal{G}_b$ given in \eqref{var:diagonal} is $O(1)$ and the covariance kernel in \eqref{var:offdiag} is $O(1/B)$. Therefore, for large $B$, the approximation works reasonably well.
    
    \item \textit{Choice of the tuning parameter $n_{\max}$:} Notice that number of calculations required to estimate the covariance kernel given in \eqref{var:offdiag} grows with the number of frequencies between $\omega_0$ and $\omega_0+\delta$.  By traversing frequencies by testing in smaller batches, this reduces computation times significantly compared to letting $\delta$ vary over the whole frequency domain in a single pass.  
\end{enumerate}

\end{Remark}

All computations in what follows were performed using \texttt{R} 4.0.3 \citep{rcite}.  With the computational improvements introduced above, the run time using the proposed search algorithm for a single realization of functional white noise of length $T=2000$ with $R=5$ observed points in the functional domain using $B=5$ time blocks and $K=15$ tapers with $n_{\max}=40$ and $d_0=100000$ is 5 to 6 minutes on a laptop with 32GB RAM and a 8-core 2.4 GHz processor on a Mac operating system.  Without these improvements, the run time for the same data settings is over 10 hours, and many of the larger data settings considered in Section \ref{sec:sims} are not computationally feasible.  



\subsection{Test for Stationarity Within Frequency Blocks}

The proposed methodology produces homogeneous regions of frequencies in which the power spectrum varies only across time. It is natural to then seek to identify frequency bands for which the second order structure is stationary. Specifically, within a frequency band $[\omega_1,\omega_2]$, the power spectrum $f_{u,\omega}(\tau,\sigma) = f_u(\tau,\sigma)$ for all $u, \tau,\sigma \in [0,1]$ and $\omega \in [\omega_1,\omega_2].$ Furthermore, if the process in stationary within this region, the power spectrum is constant across time $u$, i.e., $f_{u,\omega}(\tau,\sigma) = f(\tau,\sigma)$ within than band. In this situation, the demeaned power spectrum
$$g_{u,\omega}(\tau,\sigma) = f_{u,\omega}(\tau,\sigma) - \int_0^1 f_{u,\omega}(\tau,\sigma) du \equiv 0,$$
for for all  $u, \tau,\sigma \in [0,1]$ and $\omega \in [\omega_1,\omega_2].$ Therefore, to test stationarity within a frequency band, we consider null hypothesis
$$H_0: g_{u,\omega}(\tau,\sigma) \equiv 0, ~~ \text{ almost everywhere  } u,\tau,\sigma \in [0,1],~ \omega \in [\omega_1,\omega_2],$$
against the alternative
$$H_a: g_{u,\omega}(\tau,\sigma) \neq 0, ~~ \text{on a set of positive Lebesgue measure.}$$
To test this hypothesis, we propose the test statistic
$$Q^0(\omega_1,\omega_2) = \frac{1}{B} \sum_{b=1}^B\int_{\omega_1}^{\omega_2}\int_0^1\int_0^1 \left\vert \widehat{g}_{b/B,\omega}(\tau,\sigma) \right\vert^2d\tau d\sigma d\omega.$$
As $\widehat{g}$ is a consistent estimator of the true demeaned power spectrum $g$, this  statistic is close to $0$ under $H_0$ and takes large positive values under the alternative. Now we formalize this idea through the asymptotic distribution of this statistic.

\begin{thm}\label{thm:sttest}
Assume that $g_{u,\omega}(\tau,\sigma) = g_{u}(\tau,\sigma) = 0$ for all $u, \tau,\sigma \in [0,1]$ and $\omega \in [\omega_1,\omega_2].$ Under \autoref{cumglsp} and \autoref{assymp}, 
\begin{equation}\label{stat_st}
    Q^{0}(\omega_1,\omega_2) \stackrel{d}{=} \frac{1}{K}\sum_{b=1}^B \left(\| \mathcal{H}_b\|_2^2+ o(1)\right), 
\end{equation}
where $\{\mathcal{H}_b\}$ is a collection of zero-mean Gaussian Process in $L^2[0,1]^2$ with covariance structure given in \eqref{covH}.
\end{thm}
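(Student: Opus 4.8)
The plan is to realize $Q^0(\omega_1,\omega_2)$ as a continuous quadratic functional of the rescaled estimator $\sqrt{K}\,\widehat g_{b/B,\cdot}$, regarded as a random element of the Hilbert space $\mathcal{K}:=L^2\!\left([\omega_1,\omega_2]\times[0,1]^2,\mathbb{C}\right)$, and then to read off the limit from a functional central limit theorem for this element together with the continuous mapping theorem. Multiplying and dividing by $K$ inside each summand shows that, up to the normalization appearing in \eqref{stat_st}, $Q^0(\omega_1,\omega_2)$ is a block-sum of the squared norms $\int_{\omega_1}^{\omega_2}\|\sqrt{K}\,\widehat g_{b/B,\omega}\|_2^2\,d\omega=\|\sqrt{K}\,\widehat g_{b/B,\cdot}\|_{\mathcal{K}}^2$. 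The entire statement therefore reduces to identifying the weak limit in $\mathcal{K}$ of $\sqrt{K}\,\widehat g_{b/B,\cdot}$, jointly over the blocks $b=1,\dots,B$.

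First I would check that this limit is centered with the claimed covariance. Under the null $g_{u,\omega}\equiv 0$ on $[\omega_1,\omega_2]$, so the mean expansion of \autoref{lem:ghat} leaves only the bias, which is $O(\log(T_B)/T_B)$; the rate condition $T/BK\to\infty$ of \autoref{assymp} (equivalently $K/T_B\to 0$) yields $\sqrt{K}\,\log(T_B)/T_B\to 0$, so $\sqrt{K}\,\widehat g_{b/B,\cdot}$ is asymptotically mean zero. Scaling the within-block covariance of \autoref{lem:ghat} by $K$ produces the kernel $F(u_b,\cdot)$ up to terms that vanish as $B\to\infty$, which is exactly the covariance \eqref{covH} of $\mathcal{H}_b$, while the cross-block covariances are $O(1/B)$, so the $\{\mathcal{H}_b\}$ are asymptotically independent across blocks. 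This is the same block-diagonal structure exploited in Remark~\ref{rmk:comp}.

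The core step is the functional CLT $\sqrt{K}\,\widehat g_{b/B,\cdot}\stackrel{d}{\to}\mathcal{H}_b$ in $\mathcal{K}$. Expanding $\widehat g_{b/B,\omega}$ through the single-taper periodograms $\widehat f^{(k)}_{b,\omega}$, and hence the fDFTs $\widetilde X_T^{(k),b,\omega}$, exhibits $\sqrt{K}\,\widehat g_{b/B,\cdot}$ as a $\sqrt{K}$-normalized average of $K$ taper contributions; because the sinusoidal tapers are orthogonal and spectrally concentrated and the series is locally stationary, these contributions are asymptotically uncorrelated across $k$, so a central limit theorem applies as $K\to\infty$. Asymptotic Gaussianity is obtained by the cumulant method: the summability of the cumulant kernels in \autoref{cumglsp} forces every joint cumulant of order at least three to vanish in the limit, while the second-order structure matches \eqref{covH}. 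To promote convergence of finite-dimensional projections to weak convergence in the infinite-dimensional space $\mathcal{K}$, I would fix an orthonormal basis $\{e_j\}$ of $\mathcal{K}$, establish the CLT for each finite coordinate block, and control the tail through $\lim_{J\to\infty}\limsup_{T\to\infty}\sum_{j>J}\E\,\bigl\lvert\langle \sqrt{K}\,\widehat g_{b/B,\cdot},e_j\rangle_{\mathcal{K}}\bigr\rvert^2=0$, for which the uniform summability in \autoref{cumglsp} and the twice-differentiability of $f_{u,\omega}$ in $\omega$ supply the needed equicontinuity in frequency.

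Finally, since $h\mapsto\|h\|_{\mathcal{K}}^2$ is continuous, the continuous mapping theorem gives $\|\sqrt{K}\,\widehat g_{b/B,\cdot}\|_{\mathcal{K}}^2\stackrel{d}{\to}\|\mathcal{H}_b\|_2^2$, and summing over the asymptotically independent blocks yields the representation \eqref{stat_st}. The main obstacle is precisely this last upgrade in the frequency direction: unlike \autoref{thm:null}, where the statistic is evaluated through a comparison at a single frequency $\omega_0+\delta$, here the statistic integrates $\|\widehat g_{b/B,\omega}\|_2^2$ over the entire band, so weak convergence must hold jointly in $\omega$. Establishing the tail bound above uniformly in $T$ over the joint $(\omega,\tau,\sigma)$ domain, rather than at a fixed frequency, is the delicate point, and it is where the differentiability of the spectral density in $\omega$ granted by \autoref{cumglsp} does the real work.
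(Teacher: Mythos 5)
Your proposal follows essentially the same route as the paper's proof: write the statistic as a continuous ($L^2$-norm) quadratic functional of the normalized process $\sqrt{K}\,\widehat g_{b/B,\omega}(\tau,\sigma)$ indexed jointly by $(\omega,\tau,\sigma)$, obtain asymptotic Gaussianity of the finite-dimensional distributions by the cumulant method (\autoref{clt_multpdg} and \autoref{clt_g}), establish tightness in $L^2$ (the paper cites Cremers--Kadelka where you propose the coordinate tail-sum criterion), and conclude by the continuous mapping theorem. One small correction: the limiting family $\{\mathcal{H}_b\}$ is \emph{not} asymptotically independent across blocks --- the $O(1/B)$ cross-block covariances are part of the limit law \eqref{covH}, and discarding them is only the computational approximation of Remark~\ref{rmk:comp}, not part of the theorem.
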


\begin{Remark}  Theorem \ref{thm:sttest} suggests that under $H_0$, $g_u(\tau,\sigma) \equiv 0$, the quantity $K/B \times Q^{0}(\omega_1,\omega_2) = \frac{1}{B}\sum_{b=1}^B \|\mathcal{H}_b\|^2 + o(1)$. Therefore, approximate $p$-values for the test of stationarity can be constructed by comparing $K Q^{0}(\omega_1,\omega_2)/B$ with simulated quantiles of $\frac{1}{B}\sum_{b=1}^B \|\mathcal{H}_b\|^2,$ for large $B$. Note that  $\frac{1}{B}\sum_{b=1}^B \|\mathcal{H}_b\|^2$ is $O_p(1)$  as $B \to \infty$ by \autoref{lem:gb}. 
\end{Remark}
 The following Lemma guarantees consistency of the proposed test.
 
 \begin{lemma}\label{alt_st}
 Assume that $g_{u,\omega}(\tau,\sigma) = g_{u}(\tau,\sigma) \neq 0$ on a set of positive Lebesgue measure. Under \autoref{cumglsp} and \autoref{assymp}, 
 $K Q^{0}(\omega_1,\omega_2)/B \to \infty$ in probability. 
 \end{lemma}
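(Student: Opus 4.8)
The plan is to show that, under the alternative, the expectation of $KQ^0/B$ is dominated by a deterministic ``signal'' term that diverges at rate $K$, while the stochastic contributions are of strictly smaller order. Write $u_b$ for the midpoint of the $b$-th block and set $\zeta_{b,\omega}=\widehat g_{b/B,\omega}-\mathbb E\,\widehat g_{b/B,\omega}$ for the centred estimation error.

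First I would read off from \autoref{lem:ghat} the two ingredients needed. Its mean expansion gives $\mathbb E\,\widehat g_{b/B,\omega}(\tau,\sigma)=g_{u_b,\omega}(\tau,\sigma)+o(1)$, where the bias is uniform in $(b,\omega,\tau,\sigma)$ and vanishes under \autoref{assymp}; its covariance displays give that the covariance kernel of $\widehat g$ is of order $1/K$ within a block and $1/(BK)$ across distinct blocks. Expanding the modulus pointwise,
\[
|\widehat g_{b/B,\omega}|^{2}=|\mathbb E\,\widehat g_{b/B,\omega}|^{2}+2\,\mathrm{Re}\!\big(\overline{\mathbb E\,\widehat g_{b/B,\omega}}\;\zeta_{b,\omega}\big)+|\zeta_{b,\omega}|^{2},
\]
and integrating over $(\tau,\sigma,\omega)\in[0,1]^2\times[\omega_1,\omega_2]$ and summing over $b$ decomposes the statistic as $Q^{0}=A_T+2\,\mathrm{Re}(C_T)+D_T$, where $A_T$ is deterministic, $\mathbb E C_T=0$, and $D_T\ge0$.

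Because $D_T\ge0$, it suffices to bound $KQ^0/B$ below by $\tfrac{K}{B}\big(A_T+2\,\mathrm{Re}\,C_T\big)$ and show this diverges. For the signal, the uniform mean expansion together with the continuity of $g_{u,\omega}$ in $u$ (it is twice differentiable by the remark following \autoref{cumglsp}) makes the block average a Riemann sum, so the deterministic part of $KQ^0/B$ is asymptotic to
\[
K\cdot\frac1B\sum_{b=1}^{B}\int_{\omega_1}^{\omega_2}\!\!\int_0^1\!\!\int_0^1|g_{u_b,\omega}(\tau,\sigma)|^2\,d\tau\,d\sigma\,d\omega\;\longrightarrow\;K\,c,
\]
with $c:=\int_0^1\!\int_{\omega_1}^{\omega_2}\!\int_0^1\!\int_0^1|g_{u,\omega}(\tau,\sigma)|^2\,d\tau\,d\sigma\,d\omega\,du$. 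Since under the alternative the integrand is nonnegative and strictly positive on a set of positive Lebesgue measure, $c>0$, whence $Kc\to\infty$ by \autoref{assymp}.

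It then remains to show the cross term is negligible relative to the signal. Using $\mathbb E C_T=0$, boundedness of the limiting means $g_{u_b,\omega}$, and the covariance rates above, a second-moment computation gives $\text{Var}(C_T)=O(B/K)$: the $B$ within-block terms contribute $O(B/K)$, and the $O(B^2)$ across-block terms contribute $B^2\cdot O(1/(BK))=O(B/K)$. Hence $C_T=O_p(\sqrt{B/K})$ and $\tfrac{K}{B}C_T=O_p(\sqrt{K/B})=o_p(K)$, so $KQ^0/B\ge Kc\,(1+o_p(1))\to\infty$ in probability. The main obstacle is exactly this variance bound: one must control the term coupling the nonvanishing mean $g_{u_b}$ to the estimation error, and it is essential to exploit the \emph{faster} $O(1/(BK))$ across-block rate from \autoref{lem:ghat} — a crude bound treating all $B^2$ cross terms as $O(1/K)$ would give $\text{Var}(C_T)=O(B^2/K)$, which is too large to be dominated by the signal. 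The discarded nonnegative term $D_T$ can instead be handled through \autoref{lem:gb}, yielding the sharper statement $KQ^0/B=Kc+O_p(1)$.
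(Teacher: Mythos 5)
Your proof is correct and follows essentially the same route as the paper's: both decompose $\widehat g_{b/B,\omega}$ around the (expected) signal $g_u$, identify the deterministic squared term of exact order $B$ as the dominant contribution, and dismiss the cross and noise terms, with your version making explicit the Riemann-sum limit $c>0$, the variance bound for the cross term, and the lower-bound role of the nonnegative noise term where the paper only asserts that ``the second term dominates.'' One small correction to your commentary: the faster $O(1/(BK))$ cross-block covariance rate is not actually essential here, since even the crude bound $\mathrm{Var}(C_T)=O(B^2/K)$ yields $\tfrac{K}{B}\,C_T=O_p(\sqrt{K})=o_p(K)$, which is still dominated by the signal term $Kc$.
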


\section{Finite Sample Properties}\label{sec:sec4}
\def\theequation{4.\arabic{equation}}
\setcounter{equation}{0}

\subsection{Simulation Studies} 
\label{sec:sims}
In order to evaluate the performance of the search algorithm in finite samples, we consider three simulation settings representing appropriate extensions of \citep{EBAsubmitted} for functional time series.  A B-spline basis with 15 basis functions is used to generate random realizations of the functional time series. 
All settings can be represented as $f_{u,\omega}(\tau,\sigma) = \phi_{u,\omega}f(\tau,\sigma)$ for $u\in [0,1]$ and $\omega \in (0,0.5)$ and   

\begin{equation}
\phi_{u,\omega} = 1 \; \mathrm{for} \; \omega \in
(0,0.5),
\end{equation}
\begin{equation}
\phi_{u,\omega} = 
\begin{cases}
10-9u \; \mathrm{for}  \; \omega \in (0,0.15 )\\
5 \; \mathrm{for}  \; \omega \in [0.15,0.35)\\
1+9u \; \mathrm{for} \; \omega \in [0.35, 0.5),\\
\end{cases}
\end{equation}
and 
\begin{equation}
\phi_{u,\omega} = 
\begin{cases}
2+\sin(8 \pi u - \pi/2)\; \mathrm{for}  \; \omega \in (0,0.15]\\
2+\cos(8 \pi u) \; \mathrm{for}  \; \omega \in (0.15,0.35]\\
2+\cos(16 \pi u) \; \mathrm{for} \; \omega \in (0.35, 0.5).\\
\end{cases}
\end{equation}
See Figure \ref{fig:sims} for an illustration of the estimated time-varying auto spectrum for a single point in the functional domain for each setting.  In the first setting, we consider functional white noise in order to ensure the method maintains appropriate control for false positives.  In the second and third settings, both linear and non-linear nonstationary dynamics are considered within frequency bands.  These settings are specially designed to assess performance in detecting time-varying dynamics of different forms, as well as subtle changes in dynamics over frequencies.  

\begin{figure}[ht!]
    \centering
    \includegraphics[width=1\textwidth]{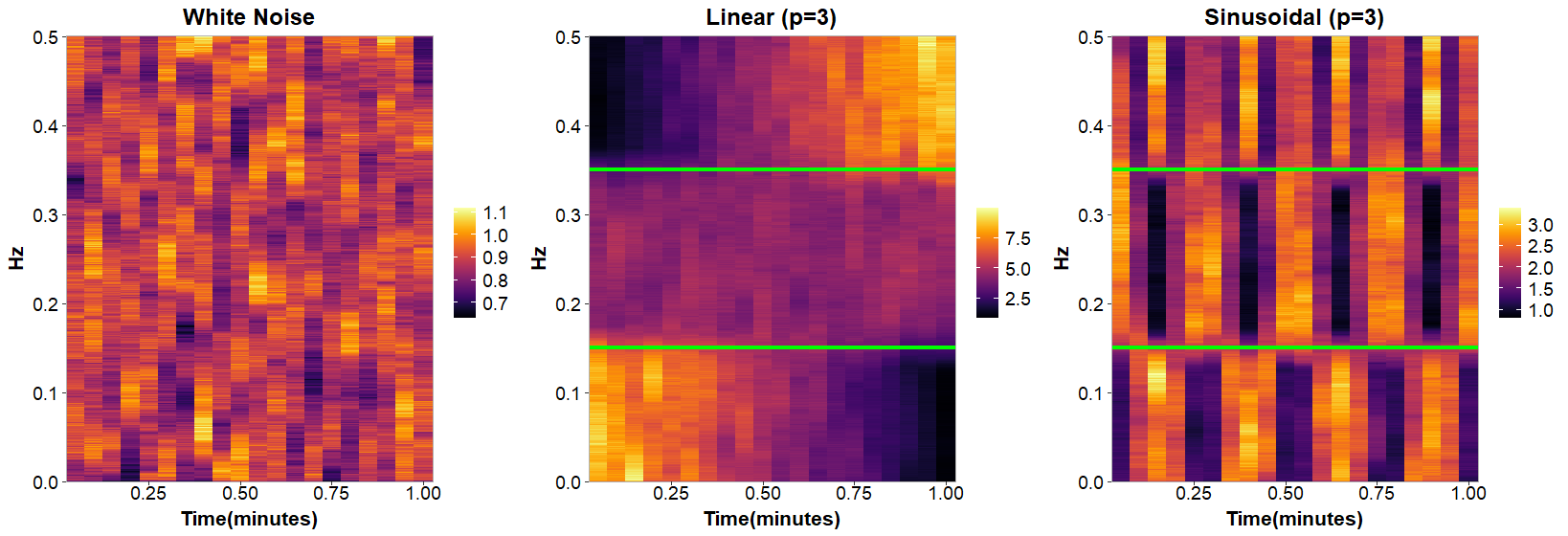}
    \caption{Local multitaper estimates of the time-varying autospectra for a single component from each of the simulation settings for time series with $N=$1000 observations per time segment and $B=$20 time segments.  Solid green lines represent the true partition of frequencies.}
    \label{fig:sims}
\end{figure}

Table \ref{tab:simresults} reports the means and standard deviations over 100 replications for the estimated number of frequency bands, $\hat{p}$, and Rand indices, $R(\hat{\boldsymbol{\omega}},\boldsymbol{\omega})$. The Rand index \citep{Rand1971} summarizes the similarity between the estimated frequency partition, $\hat{\boldsymbol{\omega}}$, and the true partition, $\boldsymbol{\omega}$.  Let $a$ be the number of pairs of Fourier frequencies in the same frequency band in $\hat{\boldsymbol{\omega}}$ and the same frequency band in $\boldsymbol{\omega}$ and $b$ be the number of pairs of Fourier frequencies in different frequency bands in $\hat{\boldsymbol{\omega}}$ and different frequency bands in $\boldsymbol{\omega}$.  Then the Rand index, $R(\hat{\boldsymbol{\omega}},\boldsymbol{\omega}) = (a+b)/\binom{N_B}{2}$, where $N_B=\lfloor T_B/2 \rfloor - 1$ is the number of Fourier frequencies.  The Rand index can take on values from 0 to 1 with values close to 1 indicating good estimation of the true frequency partition.  For these results, we consider performance under different combinations of settings for the number of observations in each time block, $T_B$, the number of approximately stationary time blocks, $B$, and the number of observations in the functional domain, $R$.  Furthermore, we fix the family-wise error rate (FWER) control level, $\alpha = 0.05$, number of frequencies tested in each pass, $n_{\max} = 30$, number of draws to approximate the null distribution of the test statistics $d_0=100,000$, local multitaper estimator bandwidth, $bw = 0.05$, such that the number of tapers, $K=\left \lfloor bw(N+1) \right \rfloor -1$, and we use a block diagonal approximation for the covariance function of the Gaussian process used to approximate the limiting null distribution of the scan statistics and select four approximately equally-spaced points in the functional domain for testing.  The proposed method provides good estimation accuracy for both the number and location of frequency band partition points, and performance generally improves as $T_B$ and $B$ increase.

\begin{table}[ht!]
\centering
\resizebox{4.5in}{!}{
\begin{tabular}{  c c| cc| cc  }

\hline 
 &\multicolumn{1}{c|}{} & \multicolumn{2}{ c| }{R=5} & \multicolumn{2}{ c}{R=10}\\

$T_B$ & \multicolumn{1}{c|}{B} & $\hat{p}$ & \multicolumn{1}{c|}{$R(\hat{\boldsymbol{\omega}},\boldsymbol{\omega})$} & $\hat{p}$ & $R(\hat{\boldsymbol{\omega}},\boldsymbol{\omega})$\\

 \hline
 \hline
\multicolumn{6}{ c }{White noise ($p=1$)} \\
\hline
 \hline
\multirow{3}{*}{200} 
 &  5   & 1.000(0.000) & 1.000(0.000) & 1.020(0.141) & 0.996(0.030)\\

 & 10   & 1.000(0.000) & 1.000(0.000) & 1.000(0.000) & 1.000(0.000)\\

 & 20  & 1.000(0.000) & 1.000(0.000) & 1.000(0.000) & 1.000(0.000)\\

 \hline
 
 \multirow{3}{*}{500} 
 &  5  & 1.970(0.810) & 0.655(0.248) & 1.850(0.796) & 0.701(0.250)\\

 & 10   & 1.350(0.539) & 0.862(0.209) & 1.260(0.525) & 0.900(0.195)\\

 & 20 & 1.130(0.367) & 0.944(0.155) & 1.130(0.367) & 0.948(0.146)\\

 \hline
 
\multirow{3}{*}{1000} 
&   5  & 1.800(0.841) & 0.735(0.255) & 1.780(0.786) & 0.736(0.251)\\

 & 10   & 1.060(0.239) & 0.974(0.105) & 1.150(0.411) & 0.941(0.157)\\

 & 20  & 1.020(0.141) & 0.992(0.059) & 1.010(0.100) & 0.996(0.042)\\

 \hline
 \hline
 
 \multicolumn{6}{ c }{Linear ($p=3$)} \\
\hline
\hline

\multirow{3}{*}{200} 
 &  5   & 2.120(0.498) & 0.738(0.129) & 2.210(0.498) & 0.762(0.115)\\

 & 10   & 2.200(0.471) & 0.761(0.102) & 2.240(0.495) & 0.767(0.105)\\

 & 20  & 2.460(0.501) & 0.816(0.083) & 2.530(0.502) & 0.826(0.082)\\

 \hline
 
 \multirow{3}{*}{500} 
 &  5  & 3.260(0.463) & 0.936(0.050) & 3.220(0.504) & 0.923(0.066)\\

 & 10   & 3.050(0.261) & 0.955(0.041) & 3.050(0.297) & 0.953(0.049)\\

 & 20 & 3.030(0.171) & 0.963(0.036) & 3.010(0.100) & 0.967(0.033)\\

 \hline
 
\multirow{3}{*}{1000} 
&   5  & 3.450(0.609) & 0.915(0.033) & 3.370(0.580) & 0.921(0.035)\\

 & 10   & 3.080(0.273) & 0.933(0.023) & 3.030(0.171) & 0.934(0.021)\\

 & 20  & 3.020(0.141) & 0.931(0.012) & 3.020(0.141) & 0.931(0.014)\\

 \hline
 \hline
 
 \multicolumn{6}{ c }{Sinusoidal ($p=3$)} \\
\hline
\hline

\multirow{3}{*}{200} 
 &  5   & 1.060(0.239) & 0.354(0.088) & 1.080(0.273) & 0.359(0.097)\\

 & 10   & 2.000(0.201) & 0.732(0.062) & 1.980(0.200) & 0.727(0.072)\\

 & 20  & 2.030(0.171) & 0.753(0.026) & 2.060(0.239) & 0.754(0.028)\\

 \hline
 
 \multirow{3}{*}{500} 
 &  5  & 2.600(0.865) & 0.726(0.143) & 2.590(0.805) & 0.718(0.153)\\

 & 10   & 2.950(0.458) & 0.897(0.085) & 2.910(0.534) & 0.893(0.088)\\

 & 20 & 3.040(0.243) & 0.952(0.039) & 2.990(0.266) & 0.945(0.054)\\

 \hline
 
\multirow{3}{*}{1000} 
&   5  & 2.300(0.916) & 0.651(0.182) & 2.140(0.954) & 0.620(0.192)\\

 & 10  & 2.660(0.623) & 0.829(0.106) & 2.720(0.533) & 0.854(0.102)\\

 & 20 & 3.040(0.197) & 0.932(0.016) & 3.010(0.100) & 0.934(0.012)\\

 \hline

\end{tabular}
}
\caption{Mean(standard deviation) for the estimated number of frequency bands, $\hat{p}$, and Rand index values, $R(\boldsymbol{\hat{\omega}},\boldsymbol{\omega})$, for $R=100$ replications.}
\label{tab:simresults}
\end{table}

 When the number of time blocks, $B$, is smaller, the proposed method slightly overestimates the number of partition points.  This is not unexpected, since the proposed method uses the asymptotic behavior of the scan statistics for testing purposes, which may not hold for smaller values of $B$.  For the first setting, Table \ref{tab:simresults} indicates good performance in correctly estimating only one frequency band.  Since the search algorithm is designed to control FWER, the false positive rate remains under control, even as the number of distinct frequencies tested increases with larger values of $T_B$.  In some cases, FWER may be controlled more tightly than $\alpha=0.05$ due to dependence among significance tests across frequencies \citep{efron2007,storey2007,causeuretal2009}. For the second setting and third settings, performance generally improves as $T_B$ and $B$ increase.  However, the accuracy is also impacted by the magnitude of the differences between the underlying demeaned time-varying power spectra across frequency bands, as well as the number of time blocks used to approximate the time-varying behavior. The time-varying dynamics of the power spectrum for adjacent frequency bands for the second setting are dissimilar for nearly all time points.  Accordingly, the algorithm is able to correctly estimate the frequency band partition with smaller $T_B$ and $B$ for this setting.  On the other hand, the two frequency bands covering higher frequencies in the third setting have similar time-varying dynamics in the power spectrum at particular time points due to their periodicities (see Figure \ref{fig:sims}). Also, without a sufficient number of time blocks, it is difficult to distinguish the different periodic time-varying behavior across frequency bands for this setting.  Taken together, this explains the need for relatively larger values of $T_B$ and $B$ for accurate estimation of the frequency band partition for the third setting compared to the second setting. 

\subsection{Frequency Band Analysis for EEG Data}

To illustrate the usefulness of the proposed methodology for functional time series analysis, we turn to frequency band analysis of EEG signals. Analyzing EEG signals as nonstationary functional time series is warranted by the high-dimensionality, nonstationarity, and strong dependence typically observed for EEG signals.  Frequency bands are also commonly used in the scientific literature to generate summary measures of EEG power spectra, so a principled approach to frequency band estimation would be a welcomed development.  In the scientific literature, there is significant variability in frequency ranges used to define traditional EEG frequency bands \citep{banddefns}.  For the purposes of comparison with the proposed method, we use the following definitions (Hz) \citep{nacy2016controlling}: delta $(0,4)$, theta $[4,7)$, alpha $[7,12)$, beta $[12,30)$, and gamma $[30,100)$.  We analyze a 4-minute segment of a 72-channel 256 Hz EEG signal from a single participant from the study described in \citep{Trujilloetal2017}.  
Participants in the study sat in a wakeful resting state and alternated between eyes open (EO) and eyes closed (EC) conditions at 1 minute intervals.  Given this particular design, we expect to see periodic nonstationary behavior similar to that of the third simulation setting introduced in Section \ref{sec:sims}. 

For computational efficiency, we downsample the time series at 32 Hz to produce a series of length $T=7680$.  For the local multitaper estimator of the time-varying power spectrum, we use 10-second segments resulting in $T_B=32 \times 10 = 320$ observations per segment and $B=7680/320 = 24$ segments.  Other parameters (e.g. FWER control level, number of frequencies tested in each pass, local multitaper bandwidth, etc.) follow the same settings used for the simulations in Section \ref{sec:sims}.  Since it is possible that different brain regions may be characterized by different frequency band structures, we consider two groups of channels, one representing the parietal and occipital lobes (PO7, PO3, POz, PO4, and PO8), which is associated with attention, and one representing the frontal and central lobes (F1, Fz, F2, FC1, FCz, FC2), which is associated with sensorimotor function \citep{eoecpaper}.  By applying the proposed method to each group, we can better understand if the frequency band structures that characterize the time-varying dynamics of the power spectrum are different within sub-regions of the functional domain.  

Figure \ref{fig:specPO} presents the log autospectra for the five channels associated with attentional system areas along with traditional EEG frequency bands and estimated EEG frequency bands using the proposed method.  
Applying the proposed methodology to this data revealed three frequency bands with different time-varying dynamics (Hz): (0,3.8), [3.8,12.6), [12.6,16).  Comparing with the traditional frequency band partition, the proposed low frequency band $(0,3.8)$ coincides with the traditional delta band $(0,4)$, but the next proposed frequency band $[3.8,12.6)$ covers both the traditional theta $[4,7)$ and alpha $[7,12)$ frequency bands.  This suggests that while the delta band exhibits different time-varying characteristics from other bands, the theta and alpha bands exhibit similar time-varying characteristics for these channels.   These results are not surprising, as it is well-known that alpha band power increases during EC conditions and attenuates with visual stimulation during EO conditions.  As is the case with this participant, similar behavior has also been observed for theta band power, with larger reductions observed in the posterior regions of the brain, including the attentional system area channels currently under study \citep{BARRY20072765,BARRY2017293}. Hence, it is reasonable that the time-varying dynamics of the alpha and theta bands may be similar for this particular set of EEG channels.  

\begin{figure}[ht!]
    \centering
    \includegraphics[width=1\textwidth]{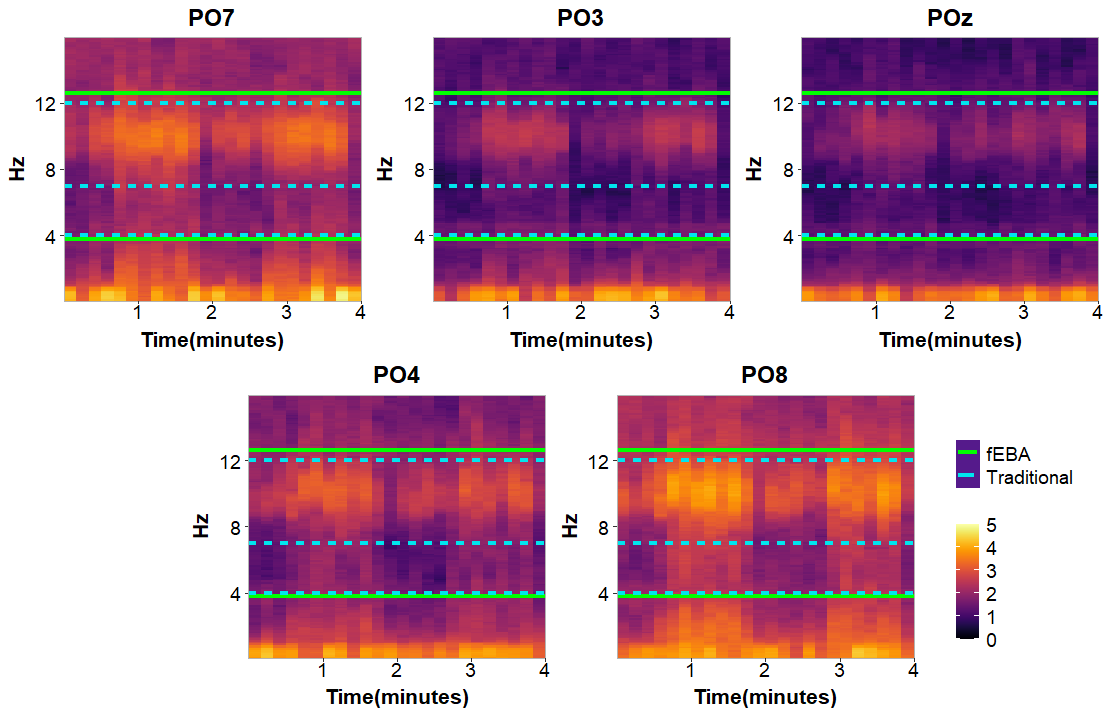}
    \caption{Log time-varying autospectra for 5 EEG channels and frequency bands determined by the proposed methodology (solid green lines) and traditional frequency bands (dashed blue lines) are displayed.  These EEG channels measure activity in the parietal (P) and occipital (O) lobes, which are associated with attention and visual processing.   }
    \label{fig:specPO}
\end{figure}

To better understand differences in the time-varying behavior across the estimated bands, Figure \ref{fig:ghatPO} displays a smoothed estimate of the frequency-band specific demeaned time-varying autospectra, $\hat{g}_u^{(p)}(\tau_i,\tau_i), p=1,2,3$ for each channel $i=1,\ldots,5$.  Smoothing was performed using cubic splines with 4 knots to better visualize the slowly-evolving time-varying dynamics under the assumption of local stationarity.  It can be seen that the time-varying dynamics for the estimated frequency band corresponding to the traditional delta band, (0, 3.8), coincides with the time-varying dynamics of the estimated frequency band covering the theta and alpha bands, [3.8, 12.6), for some channels (PO7, PO4, PO8), but not others (PO3, POz).  Since the search algorithm relies on an integrated scan statistic that integrates over the functional domain, it is sensitive to differences in the time-varying dynamics for proper subsets of the functional domain.  Accordingly, the proposed method correctly distinguishes between lower frequencies (0,3.8) and higher frequencies in the estimated frequency band structure.

\begin{figure}[ht!]
    \centering
    \includegraphics[width=1\textwidth]{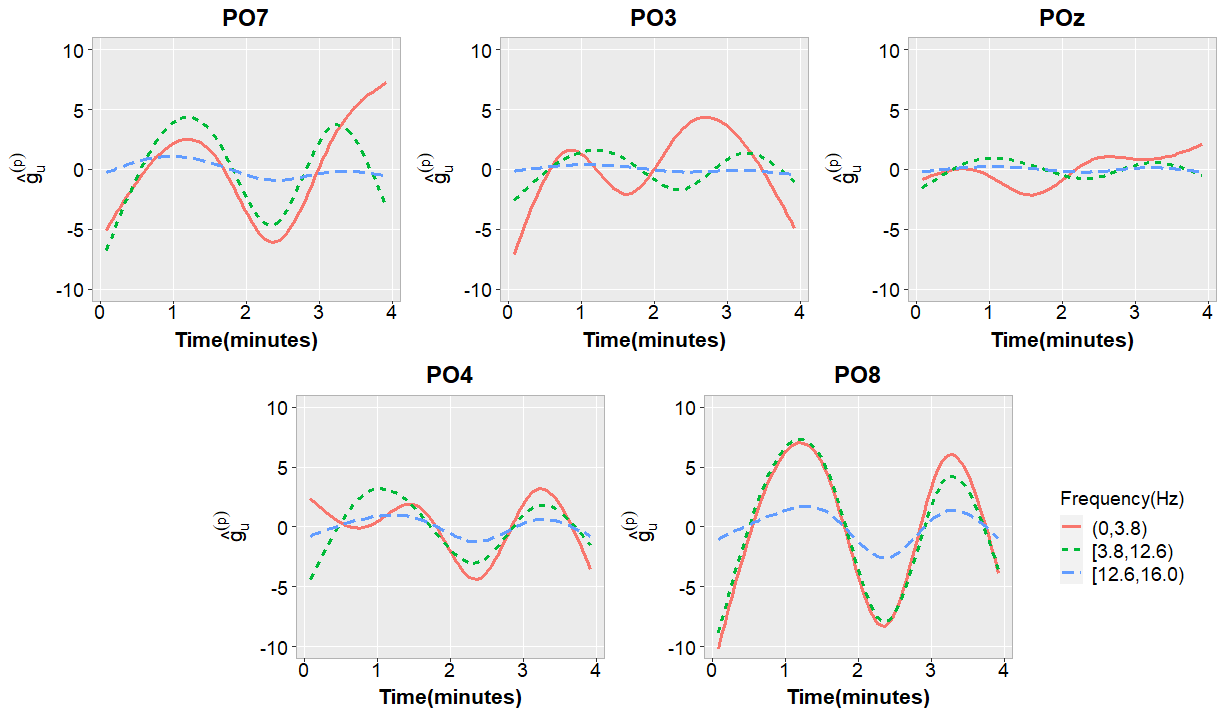}
    \caption{Smoothed estimator of the frequency band specific demeaned time-varying power spectra $g^{(p)}_u$ $p=1,2,3$ for 5 EEG channels using cubic splines with 4 knots.  Frequency bands are estimated used the proposed method and these channels measure activity in the parietal (P) and occipital (O) lobes.}
    \label{fig:ghatPO}
\end{figure}

Turning to the group of six EEG channels covering the frontal and central lobes, Figure \ref{fig:specFC} presents the log autospectra along with traditional EEG frequency bands and estimated EEG frequency bands using the proposed method.  
Applying the proposed methodology to this data revealed five frequency bands with different time-varying dynamics (Hz): (0,2.2), [2.2,4.9), [4.9, 8.1), [8.1,11.8), [11.8,16.0).  These estimated bands align reasonably well with the traditional EEG frequency bands.  However, the estimated bands suggest that the traditional delta band, (0,4), should be characterized by two sub-bands, (0,2.2) and [2.2, 4.9), which exhibit significantly different time-varying behavior of the power spectrum.  Such findings have been noted in the scientific literature, in which the so-called ``slow delta'' (0.7-2 Hz) and ``fast delta'' (2-4 Hz) bands exhibit different behavior during the wake-sleep transition \citep{BENOIT20002103}.  \citep{BARRY20072765} also observed that the magnitude of the difference in theta power between the EO and EC conditions is less for frontal and central brain regions compared to posterior regions, while the magnitude of the difference for the alpha band is similar across regions.  This is supported by the current analysis and can help explain why the theta and alpha bands are estimated to have similar time-varying behavior for the group of posterior region EEG channels (Figure \ref{fig:specPO}), and different time-varying dynamics for the group of central and frontal region EEG channels (Figure \ref{fig:specFC}).  

\begin{figure}[ht!]
    \centering
    \includegraphics[width=1\textwidth]{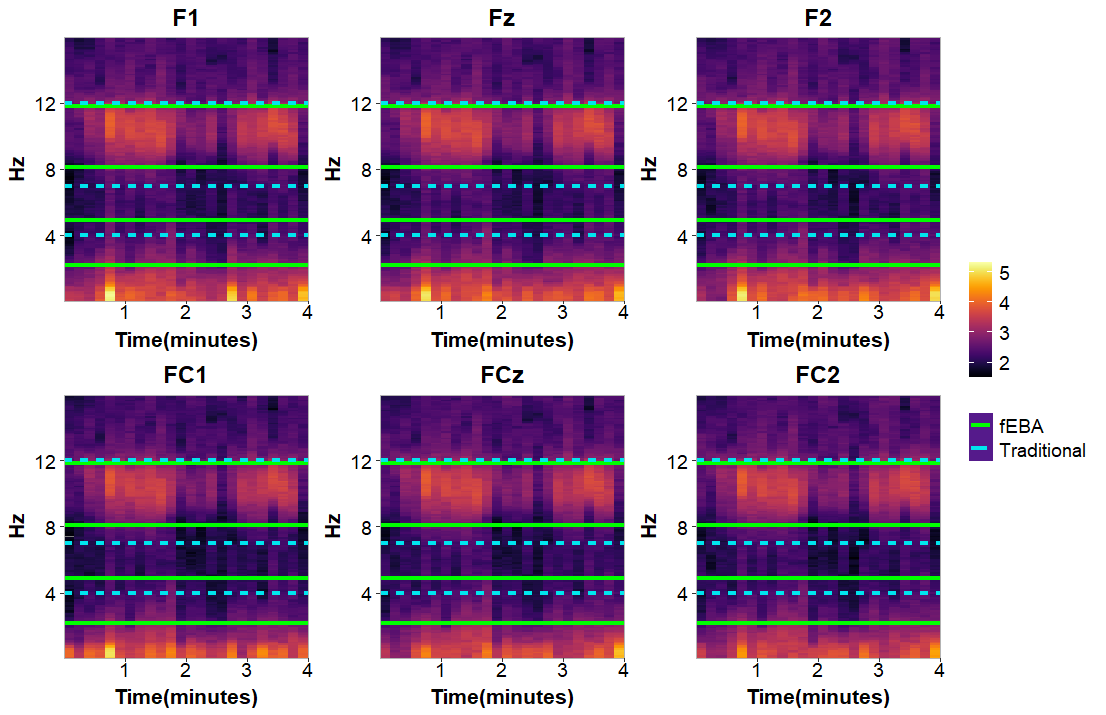}
    \caption{Log time-varying autospectra for 6 EEG channels and frequency bands determined by the proposed methodology (solid green lines) and traditional frequency bands (dashed blue lines) are displayed.  These EEG channels measure activity in the frontal (F) and central (C) lobes, which are associated with sensorimotor function.}
    \label{fig:specFC}
\end{figure}

The smoothed estimates of the frequency-band specific demeaned time-varying autospectra for these channels (see Figure \ref{fig:ghatFC}) illustrate the different time-varying behavior of the estimated frequency bands captured by the search algorithm.  The two sub-bands covering the traditional delta band indicate very different time-varying behavior.  Also, the estimated band that roughly corresponds to the traditional alpha band, [8.1, 11.8), has a more regular and pronounced time-varying behavior, corresponding to the alternating EO and EC conditions, compared to the other estimated frequency bands. In summary, the proposed search algorithm estimates frequency bands that can better characterize the power spectrum for the particular functional EEG time series under study, compared to traditional EEG frequency band analysis, and can be used to construct customized frequency band summary measures for characterizing different brain regions.   

\begin{figure}[ht!]
    \centering
    \includegraphics[width=1\textwidth]{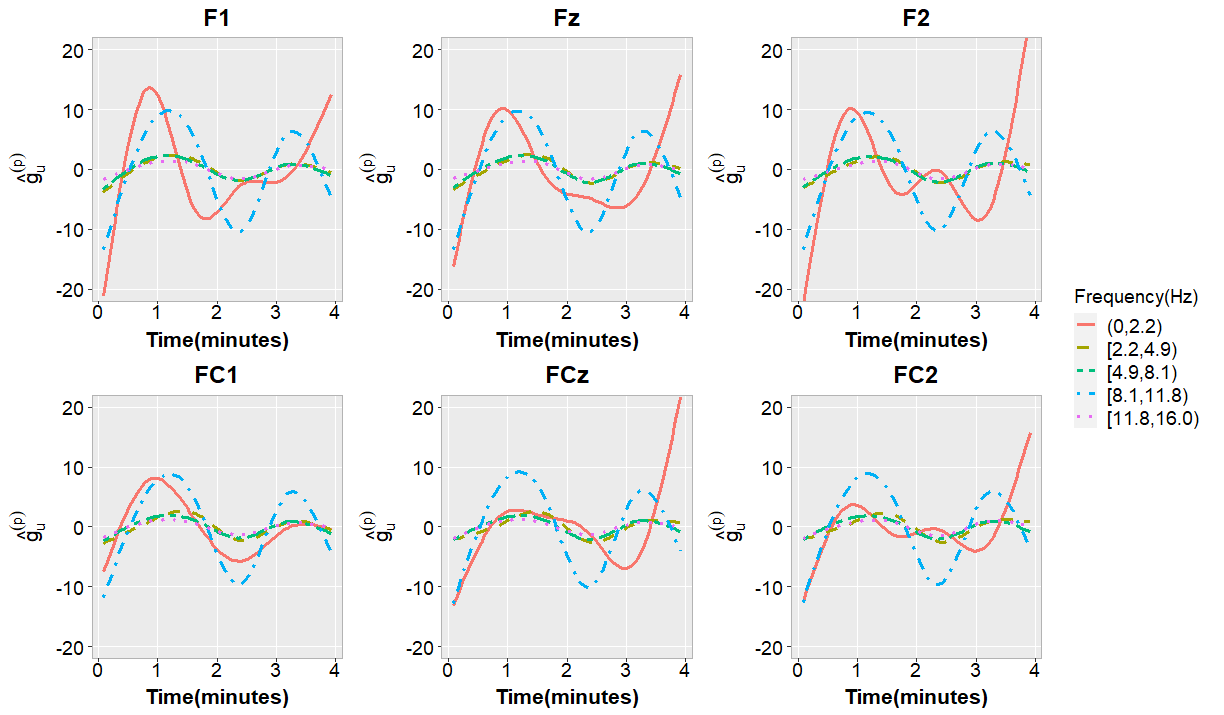}
    \caption{Smoothed estimator of the frequency band specific demeaned time-varying power spectra $g^{(p)}_u$ $p=1,2,3$ for 5 EEG channels using cubic splines with 4 knots.  Frequency bands are estimated used the proposed method and these channels measure activity in the frontal (F) and central (C) lobes.}
    \label{fig:ghatFC}
\end{figure}

\section{Discussion}
\label{sec:discussion}
The frequency band analysis framework for nonstationary functional time series introduced in this article offers a quantitative approach to identifying frequency bands that best preserve the nonstationary dynamics of the underlying functional time series.  This framework allows for estimation of both the number of frequency bands and their corresponding endpoints through the use of a sensible integrated scan statistic within an iterative search algorithm.  Another test statistic is also offered to determine which bands, if any, are stationary with respect to time.  
Motivated by the application to EEG frequency band analysis, it would be interesting to consider extensions of this framework that enable localization of the frequency band estimation framework in the time and functional domains.  
Such extensions would allow for the frequency band estimation framework to automatically adapt to local spectral characteristics without needing to pre-specify particular time segments or subsets of the functional domain for analysis.  However, these extensions present significant computational challenges associated with searching over multiple spaces simultaneously. 


We have focused on estimation of frequency bands for a single nonstationary functional time series, but this framework can also be extended for the analysis of multiple functional time series.  For example, extending this framework for estimating frequency bands for classification and clustering of functional time series would provide researchers with optimal frequency band features for supervised and unsupervised learning tasks.  This extension could be very useful in the study of EEG and fMRI signals to construct frequency band features that are associated with clinical and behavioral outcomes or that can be used to identify groups of time series with similar spectral characteristics.

\begin{appendix}
\def\theequation{A.\arabic{equation}}
\setcounter{equation}{0}
\section*{Technical Details}

\subsection{Properties of Multitaper Periodogram of Functional Time Series}

We first establish some asymptotic properties of the multitaper periodogram estimator defined in \eqref{eq:multpdg}. Following the notations in \citep{dahlhaus1985asymptotic} we define the quantities
$$H_{k}(\omega) =\sum_{t=1}^{T_B} e^{-2\pi\omega t}v_b^k(t),$$
$$H_{k,l}(\omega) =\sum_{t=1}^{T_B} e^{-2\pi\omega t}v_b^k(t)v_b^l(t),$$
Note that, 
$$\int_{-1}^{1}\vert H_k(\omega)\vert^2d\omega  = \sum_{t=1}^{T_B} (v_b^k(t))^2 = 1.$$
Moreover by \citep{dahlhaus1985asymptotic} we can write an upper bound $L(\omega)$ for both the functions $\sqrt{T_B}H_k(\omega)$ and  ${T_B}H_{k,k}(\omega)$ where
$$L(\omega) = \left\{\begin{array}{ll}
{T_B},     & \text{if }\vert \omega \vert \leq 1/T_B  \\
\frac{1}{\vert \omega \vert}, &     \text{otherwise}.
\end{array}\right. $$

\begin{lemma}
\label{cumfft} Let $\tilde{X}_T^{(k),b,\omega}$ be the $k$-th functional discrete Fourier transformation of the observed time series at $b$-th block and frequency $\omega$, as defined in \eqref{eq:fDFT}. Under \autoref{cumglsp} and \autoref{assymp}, the $l$-th order cumulant kernel of the fDFT is given by
\begin{align*}
    &cum\left(\tilde{X}_T^{(k_1),b_1,\omega_1}(\tau_1),\dots,\tilde{X}_T^{(k_l),b_l,\omega_l}(\tau_l)\right)\\
    &=\left\{\begin{array}{ll}
        H_{k_1,\dots,k_l}\left(\sum_{j=1}^l \omega_j\right)f_{u_b,\omega_1,\dots,\omega_{l-1}}(\tau_1,\dots,\tau_l) + o(1), & \text{if } b_1=b_2=\dots=b_l=b\\ 0, & \text{otherwise. } 
    \end{array}\right.
\end{align*}
where $H_{k_1,\dots,k_l}(\omega) = \displaystyle\sum_{t=1}^{T_B}v_b^{k_1}(t)v_b^{k_2}(t)\dots v_b^{k_l}(t)e^{-i\omega t}.$
\end{lemma}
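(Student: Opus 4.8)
The plan is to expand the joint cumulant by multilinearity and then exploit the disjoint temporal support of the tapers together with the summability of the cumulant kernels from \autoref{cumglsp}. Since $v_b^k$ is supported on the $b$-th block and $I_b(\cdot/T)$ enforces the same restriction, multilinearity of joint cumulants gives
\begin{align*}
&\cm\left(\widetilde{X}_T^{(k_1),b_1,\omega_1}(\tau_1),\ldots,\widetilde{X}_T^{(k_l),b_l,\omega_l}(\tau_l)\right)\\
&\quad= \sum_{t_1,\ldots,t_l}\left(\prod_{j=1}^l v_{b_j}^{k_j}(t_j)\,e^{-\im 2\pi\omega_j t_j}\right)\cm\left(X_{t_1,T}(\tau_1),\ldots,X_{t_l,T}(\tau_l)\right),
\end{align*}
where each $t_j$ is effectively restricted to the $b_j$-th block by the support of $v_{b_j}^{k_j}$.

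First I would dispose of the off-diagonal case. When $b_1,\ldots,b_l$ are not all equal, the tapers force the time indices into distinct, non-overlapping blocks. By the summability \eqref{eq:kapmix}, the cumulant kernel concentrates on configurations in which all lags $t_j-t_l$ are bounded; such configurations can straddle a block boundary only for an asymptotically vanishing fraction of indices, and the sinusoidal tapers additionally vanish at the block endpoints. Bounding the tapers uniformly by $O(T_B^{-1/2})$, the off-diagonal sum therefore carries no leading-order contribution and is $o(1)$, giving the second branch of the claim.

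For the diagonal case $b_1=\cdots=b_l=b$, I would replace the triangular-array cumulant by its locally stationary approximation $\cm(X^{(u_b)}_{t_1}(\tau_1),\ldots,X^{(u_b)}_{t_l}(\tau_l))$ at the block midpoint $u_b$. The replacement error is handled by parts (i)--(iv) of \autoref{cumglsp}: passing from the array cumulant to the stationary cumulant at rescaled time $t_l/T$ is $O(1/T)$ by (i)--(ii), while shifting the rescaled argument from $t_l/T$ to $u_b$ costs at most $|t_l/T-u_b|\le T_B/T$ times the Lipschitz bound from (iv); summed against the summable cumulant kernel and the $O(T_B^{-1/2})$ tapers, both errors are $o(1)$ under \autoref{assymp}. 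Since the stationary cumulant depends on the indices only through the lags, writing $c_{u_b;s_1,\ldots,s_{l-1}}(\tau_1,\ldots,\tau_l):=\cm(X^{(u_b)}_{t_1}(\tau_1),\ldots,X^{(u_b)}_{t_l}(\tau_l))$ with $s_j=t_j-t_l$, the triple sum becomes, after the change of variables,
\begin{align*}
&\sum_{t_l}\sum_{s_1,\ldots,s_{l-1}}\left(\prod_{j=1}^{l-1}v_b^{k_j}(t_l+s_j)\right)v_b^{k_l}(t_l)\\
&\qquad\times e^{-\im 2\pi t_l\sum_{j}\omega_j}\,e^{-\im 2\pi\sum_{j=1}^{l-1}\omega_j s_j}\,c_{u_b;s_1,\ldots,s_{l-1}}(\tau_1,\ldots,\tau_l).
\end{align*}

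The main obstacle is factorizing this expression, because the shifted tapers $v_b^{k_j}(t_l+s_j)$ prevent the sum over $t_l$ from collapsing into $H_{k_1,\ldots,k_l}$. The resolution uses that the sinusoidal tapers vary only on the scale $T_B$, so $v_b^{k_j}(t_l+s_j)=v_b^{k_j}(t_l)+O(|s_j|k_j/T_B)$ uniformly, while the weighted summability \eqref{eq:kapmix} (with the factor $|t_j|^\ell$) keeps the effective range of each lag $s_j$ bounded. Replacing each shifted taper by its value at $t_l$ therefore produces an error summable in the lags and of order $O(K/T_B)=o(1)$, where the bound $L(\omega)$ for the tapered kernels supplies the uniform control. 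Once the tapers are recentered, the sum over $t_l$ is exactly $\sum_t\prod_{j=1}^l v_b^{k_j}(t)\,e^{-\im 2\pi(\sum_j\omega_j)t}=H_{k_1,\ldots,k_l}(\sum_j\omega_j)$, and the remaining lag-sum $\sum_{s_1,\ldots,s_{l-1}}c_{u_b;s_1,\ldots,s_{l-1}}(\tau_1,\ldots,\tau_l)\,e^{-\im 2\pi\sum_{j=1}^{l-1}\omega_j s_j}$ is, by definition, the spectral density kernel $f_{u_b,\omega_1,\ldots,\omega_{l-1}}(\tau_1,\ldots,\tau_l)$, yielding the diagonal branch. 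The careful bookkeeping of this taper-recentering error is where the bulk of the technical work lies.
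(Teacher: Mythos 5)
Your proposal is correct and follows essentially the same route as the paper: multilinear expansion of the cumulant, disposal of the cross-block case via the summability in \autoref{cumglsp}, replacement of the triangular-array cumulants by the stationary approximation at the block midpoint, and a change of variables to lags followed by factorization into $H_{k_1,\dots,k_l}$ times the cumulant spectral kernel. The only difference is that the paper delegates the taper-recentering step (controlling $\sum_{t}\prod_j v_b^{k_j}(t+s_j)e^{-\im 2\pi(\sum_j\omega_j)t}$ versus $H_{k_1,\dots,k_l}(\sum_j\omega_j)$) to Lemmas P.4.1 and P.4.2 of Brillinger (2001), whereas you carry out the same estimate by hand via a Lipschitz bound on the sinusoidal tapers against the weighted summability \eqref{eq:kapmix}.
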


\begin{proof}
Let $B_b$ be the $b$-th time block and write
\begin{align*}
    & cum\left(\tilde{X}_T^{(k_1),b_1,\omega_1}(\tau_1),\dots,\tilde{X}_T^{(k_l),b_l,\omega_l}(\tau_l)\right)\\
    = & cum\left(\sum_{t \in B_{b_1}} v_{b_1}^{k_1}(t)X_{t,T}(\tau_1)e^{-\im 2\pi \omega_1 t},\dots,\sum_{t \in B_{b_l}} v_{b_l}^{k_l}(t)X_{t,T}(\tau_l)e^{-\im 2\pi \omega_l t}\right)\\
    = &  \sum_{t_1 \in B_{b_1}}\dots \sum_{t_l \in B_{b_l} } v_{b_1}^{k_1}(t_1)\dots v_{b_l}^{k_l}(t_l)  e^{-\im 2\pi \sum_{k=1}^l \omega_k t_k} cum \left(X_{t_1,T}(\tau_1),\dots,X_{t_l,T}(\tau_l)\right)\\
    = &  \sum_{t_1 \in B_{b_1}}\dots \sum_{t_l \in B_{b_l} } v_{b_1}^{k_1}(t_1)\dots v_{b_l}^{k_l}(t_l)  e^{-\im 2\pi \sum_{k=1}^l \omega_k t_k} cum \left(X_{t_1}^{(u_{b_1})}(\tau_1),\dots,X_{t_l}^{(u_{b_l})}(\tau_l)\right) +o(1).
\end{align*}
If we have $b_i \neq b_j$ for any pair $i,j \in \{1,2,\dots,l\}$, the last expression converges to 0 by \autoref{cumglsp}.

For the case $b_1=b_2=\dots=b_l = b$, we write $t_j = t_1 + v_j$ for $j=2,\dots,l$. The expression is then simplified to
\begin{align*}
    &\sum_{v_2}\dots\sum_{v_l}\exp\left(-\im\sum_{j=2}^{l}v_j\omega_j\right) cum \left(X_{t_1}^{(u_{b})}(\tau_1),\dots,X_{t_1+v_l}^{(u_b)}(\tau_l)\right)\sum_{t_1}v_{b}(t_1)v_b(t_1+v_2) \times\\
    &~~~~~~~~~~\dots \times v_b(t_1+v_l)\exp\left(-\im \sum_{j=1}^l \omega_j t_1\right)
\end{align*}
The result then follows by Lemma P.4.1 an Lemma P.4.2 from \citep{brillinger2001}.
\end{proof}

\begin{lemma}\label{lem:multpdg}
Let $\widehat{f}^{(mt)}_{b,\omega}$ be the multitaper periodogram estimator, as defined in \eqref{eq:multpdg}. Under \autoref{assymp}, we have 
\begin{align}
    &\E \hat{f}^{(mt)}_{b,\omega}(\tau,\sigma) = f_{u_b,\omega}(\tau,\sigma) + O\left(\log(T_B)/T_B\right) + O(1/T).\\
    &\text{Cov}\left(\hat{f}^{(mt)}_{b,\omega_1}(\tau_1,\sigma_1),\hat{f}^{(mt)}_{b,\omega_2}(\tau_2,\sigma_2)\right) =  \displaystyle{\left\{\begin{array}{ll}
       \frac{f_{u_b,\omega}(\tau_1,\tau_2)f_{u_b,\omega}(\sigma_1,\sigma_2)}{K}  + O(1/T_B) + O(1/T), & \text{ if } \omega_1 = \omega_2 = \omega \nonumber \\
       O(1/T_B) + O(1/T), & \text{ if } \omega_1 \neq \omega_2.
    \end{array}
    \right.}
\end{align}
\end{lemma}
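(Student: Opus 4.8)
The plan is to reduce both moments of the single-taper periodogram to cumulants of the fDFT and then invoke \autoref{cumfft}, finally averaging over tapers. Throughout I write $d_k(\omega,\tau):=\widetilde{X}_T^{(k),b,\omega}(\tau)$ and use that, since the data are real and centered, $\overline{d_k(\omega,\sigma)}=d_k(-\omega,\sigma)$ and each $\E[d_k]=0$, so second moments are genuine second-order cumulants.

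For the mean, since the process is centered,
\[
\E\,\widehat{f}^{(k)}_{b,\omega}(\tau,\sigma)=\E\big[d_k(\omega,\tau)\,d_k(-\omega,\sigma)\big]=\mathrm{cum}\big(d_k(\omega,\tau),d_k(-\omega,\sigma)\big),
\]
which by \autoref{cumfft} equals $H_{k,k}(0)\,f_{u_b,\omega}(\tau,\sigma)+o(1)$. Because the tapers are normalized, $H_{k,k}(0)=\sum_{t}(v_b^k(t))^2=1$, so the leading term is $f_{u_b,\omega}(\tau,\sigma)$ for every $k$ and is unchanged by the average over $k$. To upgrade the $o(1)$ to the claimed rate I would write the exact expectation as the convolution $\int f_{u_b,\lambda}(\tau,\sigma)\,|H_k(\omega-\lambda)|^2\,d\lambda$ of the local spectral density against the spectral window (of unit mass); bounding $\int (f_{u_b,\lambda}-f_{u_b,\omega})\,|H_k(\omega-\lambda)|^2\,d\lambda$ via the twice-differentiability of $f$ in $\omega$ together with the envelope $L(\cdot)$ gives the leakage term $O(\log(T_B)/T_B)$, while replacing $X_{t,T}$ by its stationary surrogate $X_t^{(u_b)}$ contributes $O(1/T)$ through \autoref{cumglsp}(i).

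For the covariance I would first linearize the multitaper average,
\[
\text{Cov}\big(\widehat{f}^{(mt)}_{b,\omega_1}(\tau_1,\sigma_1),\widehat{f}^{(mt)}_{b,\omega_2}(\tau_2,\sigma_2)\big)=\frac{1}{K^2}\sum_{k_1,k_2=1}^{K}\text{Cov}\big(\widehat{f}^{(k_1)}_{b,\omega_1}(\tau_1,\sigma_1),\widehat{f}^{(k_2)}_{b,\omega_2}(\tau_2,\sigma_2)\big),
\]
and then expand each summand by the product theorem for cumulants. With $A=d_{k_1}(\omega_1,\tau_1)$, $B=d_{k_1}(-\omega_1,\sigma_1)$, $C=d_{k_2}(\omega_2,\tau_2)$, $D=d_{k_2}(-\omega_2,\sigma_2)$, the covariance of the two products is $\mathrm{cum}(A,C)\mathrm{cum}(B,D)+\mathrm{cum}(A,D)\mathrm{cum}(B,C)+\mathrm{cum}(A,B,C,D)$. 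Each factor is evaluated by \autoref{cumfft}, the decisive quantity being the frequency carried by its taper window: $\mathrm{cum}(A,C)$ carries $\omega_1+\omega_2$, $\mathrm{cum}(B,D)$ carries $-\omega_1-\omega_2$, and $\mathrm{cum}(A,D),\mathrm{cum}(B,C)$ carry $\pm(\omega_1-\omega_2)$. By orthonormality of the sinusoidal tapers $H_{k_1,k_2}(0)=\sum_t v_b^{k_1}(t)v_b^{k_2}(t)=\delta_{k_1,k_2}$, whereas $H_{k_1,k_2}(\nu)$ is negligible for $\nu$ bounded away from $0$ modulo one, controlled through $L(\cdot)$.

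When $\omega_1=\omega_2=\omega$ (interior, away from $0$ and $1/2$) only the pairing $\mathrm{cum}(A,D)\mathrm{cum}(B,C)$ has vanishing window frequency: both factors reduce to $\delta_{k_1,k_2}$ times a value of $f_{u_b,\omega}$, and $\tfrac1{K^2}\sum_{k_1,k_2}\delta_{k_1,k_2}=\tfrac1K$, producing the stated $O(1/K)$ leading term. The complementary pairing carries frequency $2\omega\neq0$ and is negligible, while the fourth-order term carries $H_{k_1,k_1,k_2,k_2}(0)=\sum_t (v_b^{k_1}(t))^2(v_b^{k_2}(t))^2=O(1/T_B)$; together with the local-stationarity replacement these yield $O(1/T_B)+O(1/T)$. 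When $\omega_1\neq\omega_2$ all four window frequencies $\pm\omega_1\pm\omega_2$ stay bounded away from $0$, so every pairing is negligible and only the $O(1/T_B)+O(1/T)$ remainder survives. The main obstacle is quantitative rather than structural: one must convert the qualitative smallness of $H_{k_1,k_2}(\nu)$ for $\nu\neq0$ into uniform $O(1/T_B)$ bounds that persist after summing over the $K^2$ taper pairs and dividing by $K^2$, which requires handling the $L(\cdot)$ envelope of the sinusoidal tapers carefully under the rate constraints of \autoref{assymp}, together with refining the $o(1)$ of \autoref{cumfft} into the explicit $O(1/T)$ local-stationarity error via \autoref{cumglsp}.
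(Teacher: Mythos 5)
Your proof follows essentially the same route as the paper's: the mean via the spectral-window convolution $\int |H_k(\alpha)|^2 f_{u_b,\omega-\alpha}\,d\alpha$ plus a Taylor expansion controlled by the envelope $L$, and the covariance via the product theorem for cumulants together with taper orthogonality $H_{k_1,k_2}(0)=\delta_{k_1,k_2}$ and the Cauchy--Schwarz control of $H_{k_1,k_2}(\nu)$ for $\nu\neq 0$ (the paper delegates this expansion to Brillinger's Theorem 5.2.8, but the content is identical). The only quibble is a conjugation convention: if the second argument of the complex covariance is conjugated, the surviving zero-frequency pairing is $\mathrm{cum}(A,C)\,\mathrm{cum}(B,D)$ and yields $f_{u_b,\omega}(\tau_1,\tau_2)f_{u_b,\omega}(\sigma_1,\sigma_2)$, matching the lemma's statement, rather than the $f_{u_b,\omega}(\tau_1,\sigma_2)f_{u_b,\omega}(\tau_2,\sigma_1)$ your pairing produces.
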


\begin{proof}
We start by noting that,
\begin{equation*}
    \E \hat{f}^{(k)}_{b,\omega}(\tau,\sigma) = \int_{-1}^{1}\vert H_k(\alpha)\vert^2f_{u_b,\omega-\alpha}d\alpha + O(1/T),
\end{equation*}
where $u_b$ is the midpoint of the $b$-th segment. This follows from Theorem 5.2.3 from \citep{brillinger2001} applied to the approximating series $X^{(u_b)}_t$ within the $b$-th block.
Now write $f_{u_b,\omega-\alpha} = f_{u_b,\omega} + O(\vert \alpha \vert)$ and using the form of $L$, we have
\begin{equation*}
    \E \hat{f}^{(k)}_{b,\omega}(\tau,\sigma) = f_{u_b,\omega}(\tau,\sigma) + O\left(\log(T_B)/T_B\right) + O(1/T).
\end{equation*}
Taking an average of the last expression over different tapers, we get
\begin{equation*}
    \E \hat{f}^{(mt)}_{b,\omega}(\tau,\sigma) = f_{u_b,\omega}(\tau,\sigma) + O\left(\log(T_B)/T_B\right) + O(1/T).
\end{equation*}
To calculate the covariance, first note that
$$\text{Cov}\left(\hat{f}^{(k)}_{b_1,\omega_1}(\tau_1,\sigma_1),\hat{f}^{(k)}_{b_2,\omega_2}(\tau_2,\sigma_2)\right) = 0,~~~~~ \text{if } b_1 \neq b_2.$$
Similar calculations as in the proof of Theorem 5.2.8 in \citep{brillinger2001} yields  
\begin{align*} 
    &\text{Cov}\left(\hat{f}^{(k)}_{b,\omega_1}(\tau_1,\sigma_1),\hat{f}^{(k)}_{b,\omega_2}(\tau_2,\sigma_2)\right)\\ =& \left[\vert H_{k,k}(\omega_1 + \omega_2)\vert^2 f_{u_b,\omega_1}(\tau_1,\sigma_2)f_{u_b,\omega_2}(\tau_2,\sigma_1) + \vert H_{k,k}(\omega_1 - \omega_2) \vert^2 f_{u_b,\omega_1}(\tau_1,\tau_2)f_{u_b,\omega_2}(\sigma_1,\sigma_2)\right]\\
    &~~~~~~+ O(1/T_B) + O(1/T).
\end{align*}
Therefore if $\omega_1 \neq \omega_2$ we have
$$\text{Cov}\left(\hat{f}^{(k)}_{b,\omega_1}(\tau_1,\sigma_1),\hat{f}^{(k)}_{b,\omega_2}(\tau_2,\sigma_2)\right) = O(1/T_B) + O(1/T),$$
and
$$\text{Cov}\left(\hat{f}^{(k)}_{b,\omega}(\tau_1,\sigma_1),\hat{f}^{(k)}_{b,\omega}(\tau_2,\sigma_2)\right) = f_{u_b,\omega}(\tau_1,\tau_2)f_{u_b,\omega}(\sigma_1,\sigma_2) + O(1/T_B) + O(1/T).$$
For local periodograms calculated for different tapers,
\begin{align*}
    &\text{Cov}\left(\hat{f}^{(k)}_{b,\omega_1}(\tau_1,\sigma_1),\hat{f}^{(l)}_{b,\omega_2}(\tau_2,\sigma_2)\right)\\ =& \left[\vert H_{k,l}(\omega_1 + \omega_2)\vert^2 f_{u_b,\omega_1}(\tau_1,\sigma_2)f_{u_b,\omega_2}(\tau_2,\sigma_1) + \vert H_{k,l}(\omega_1 - \omega_2) \vert^2 f_{u_b,\omega_1}(\tau_1,\tau_2)f_{u_b,\omega_2}(\sigma_1,\sigma_2)\right]\\
    &~~~~~~+ O(1/T_B) + O(1/T).
\end{align*}
By the orthogonality of the tapers,
$H_{k,l}(0) = 0$ for $k \neq l.$ And by Cauchy-Schwartz, we have $H_{k,l}(\omega) \leq \sqrt{H_{k,k}(\omega)}\sqrt{H_{l,l}(\omega)}$, and hence $L$ is indeed an upper bound for $T_B H_{k,l}(\omega).$ Therefore, in general for $k \neq l$,
\begin{align*}
    &\text{Cov}\left(\hat{f}^{(k)}_{b,\omega_1}(\tau_1,\sigma_1),\hat{f}^{(l)}_{b,\omega_2}(\tau_2,\sigma_2)\right) = O(1/T_B) + O(1/T).
\end{align*}
Therefore we have,
\begin{align*}
    &\text{Cov}\left(\hat{f}^{(mt)}_{b,\omega_1}(\tau_1,\sigma_1),\hat{f}^{(mt)}_{b,\omega_2}(\tau_2,\sigma_2)\right) = \frac{1}{K^2}\sum_{k=1}^K \sum_{l=1}^K \text{Cov}\left(\hat{f}^{(k)}_{b,\omega_1}(\tau_1,\sigma_1),\hat{f}^{(l)}_{b,\omega_2}(\tau_2,\sigma_2)\right)\\
    = & 
       \frac{1}{K} \left[ f_{u_b,\omega_1}(\tau_1,\tau_2)f_{u_b,\omega_2}(\sigma_1,\sigma_2) + f_{u_b,\omega_1}(\tau_1,\sigma_2)f_{u_b,\omega_2}(\tau_2,\sigma_1) \right]+ O(1/T_B) + O(1/T).
 \end{align*}
\end{proof}

The next theorem establishes a Central Limit Theorem type result for the multitaper periodogram estimator and is essential for proving \autoref{thm:null} and \autoref{thm:alt}.

\begin{thm}\label{clt_multpdg}
Consider the processes $E_{b,j} \in L^2[0,1]^2$ defined as $$E_{b,j}(\tau,\sigma) = \sqrt{K}\left(\hat{f}^{(mt)}_{b,\omega_j}(\tau,\sigma) - f_{u_b,\omega_j}(\tau,\sigma) \right)$$ for $b=1,2,\dots,B$ and $j=1,2,\dots,J$. For fixed $B$, as $T \to \infty$, $K \to \infty$ and $T/BK \to \infty$, the finite dimensional distributions of $\{E_{b,j}\}_{j,b}$ converge to a multivariate normal distribution. More precisely, for all $(\tau_1,\sigma_1),\dots,(\tau_d,\sigma_d) \in [0,1]^2$ and for all $d \in \mathbb{N}$,
$$\{E_{b,j}(\tau_1,\sigma_1),\dots,E_{b,j}(\tau_d,\sigma_d)\}_{b,j} \stackrel{d}{\to} \{Z_{b,j}(\tau_1,\sigma_1),\dots, Z_{b,j}(\tau_d,\sigma_d)\}_{b,j}$$
where $\{Z_{b,j}(\tau_1,\sigma_1),\dots, Z_{b,j}(\tau_d,\sigma_d)\}_{b,j}$ is a multivariate normal random vector with zero mean and covariance structure 
\begin{align*}
    &\text{Cov}\left(Z_{b_1,j_1}(\tau_1,\sigma_1),Z_{b_2,j_2}(\tau_2,\sigma_2)\right)\\
    &= \left\{\begin{array}{ll}
       f_{u_b,\omega_{j_1}}(\tau_1,\tau_2)f_{u_b,\omega_{j_2}}(\sigma_1,\sigma_2) + f_{u_b,\omega_{j_1}}(\tau_1,\sigma_2)f_{u_b,\omega_{j_2}}(\tau_2,\sigma_1) ,  & \text{if }  ~b_1 = b_2 = b,  \\
        0, & \text{otherwise.} 
    \end{array}\right.
\end{align*}
\end{thm}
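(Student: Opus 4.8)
The plan is to establish convergence of the finite-dimensional distributions by the method of cumulants. Since a multivariate Gaussian law is determined by its cumulants, and all of its cumulants of order three or higher vanish, it suffices to verify three things for the collection $\{E_{b,j}(\tau_m,\sigma_m)\}$: that the first-order cumulants tend to $0$, that the second-order cumulants converge to the covariance kernel in the statement, and that every joint cumulant of order $r\ge 3$ converges to $0$. Convergence of all cumulants, together with the Cram\'er--Wold device and the Fr\'echet--Shohat moment-convergence theorem (the limiting moment sequence being the Gaussian one, which is determinate), then yields the claimed weak convergence. The first two points are essentially in hand: by \autoref{lem:multpdg} the bias of $\hat f^{(mt)}_{b,\omega_j}$ is $O(\log(T_B)/T_B)+O(1/T)$, so $\E\,E_{b,j}=\sqrt K\,\big(O(\log(T_B)/T_B)+O(1/T)\big)\to 0$ because the hypotheses $K\to\infty$ and $T/BK\to\infty$ force $T_B/K\to\infty$ and hence $\sqrt K/T_B\to 0$; and multiplying the covariance in \autoref{lem:multpdg} by $K$ gives exactly the stated kernel within a block, with error $O(K/T_B)+O(K/T)\to 0$, while \autoref{cumfft} makes the cross-block covariance identically zero.

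The heart of the argument is the vanishing of the higher-order cumulants. Writing each single-taper periodogram as a product of two fDFTs and using multilinearity, for $r\ge 2$ one has
\[
\cm\big(E_{b_1,j_1}(\tau_1,\sigma_1),\dots,E_{b_r,j_r}(\tau_r,\sigma_r)\big)
= K^{-r/2}\sum_{k_1,\dots,k_r=1}^{K}\cm\big(\hat f^{(k_1)}_{b_1,\omega_{j_1}},\dots,\hat f^{(k_r)}_{b_r,\omega_{j_r}}\big),
\]
where the prefactor $K^{-r/2}$ is the product of the $K^{r/2}$ from the $\sqrt K$ normalizations and the $K^{-r}$ from the taper averages. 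Each inner cumulant is a cumulant of $r$ products of pairs of fDFTs; applying the product theorem for cumulants \citep[Theorem 2.3.2]{brillinger2001}, it expands as a sum over \emph{indecomposable} partitions of the $2r$ underlying fDFTs, and \autoref{cumfft} evaluates each block of size $s$ as a taper factor $H_{k_{i_1},\dots,k_{i_s}}(\cdot)$ times a bounded spectral-density kernel, the whole term vanishing unless $b_1=\dots=b_r$. This already gives block independence at every order. The orders of magnitude of the taper factors are the decisive ingredient: a size-$s$ block carries an $H$-factor of order $T_B^{1-s/2}$, so blocks larger than pairs are negligible; a pair of fDFTs carried by the same taper whose frequency arguments cancel yields $H_{k,k}(0)=\sum_t (v_b^k(t))^2=1=O(1)$; and any pair involving two distinct tapers is suppressed, since orthogonality gives $H_{k,l}(0)=0$ and in general $T_B H_{k,l}(\omega)$ is bounded by $L(\omega)$, making such a factor $O(1/T_B)$.

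The counting now closes the argument. The leading contribution comes from the diagonal taper choice $k_1=\dots=k_r=k$ with an all-pairs indecomposable partition whose blocks each contribute an $O(1)$ same-taper factor; there are only finitely many such partitions (a combinatorial constant depending on $r$), and summing the resulting $O(1)$ terms over $k=1,\dots,K$ yields a factor $K$. This diagonal contribution is therefore of order $K^{-r/2}\cdot K=K^{1-r/2}$, which reproduces the $O(1)$ covariance at $r=2$ and tends to $0$ for every $r\ge 3$. Any off-diagonal taper configuration forces at least one factor $H_{k,l}$ with $k\neq l$, gaining a power of $1/T_B$ for each such factor; since there are at most $K^{r}$ taper configurations and $T_B/K=T/(BK)\to\infty$, these terms are dominated by the diagonal one and also vanish. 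Hence all joint cumulants of order $r\ge 3$ converge to $0$, and the finite-dimensional distributions converge to the announced block-independent Gaussian limit. The main obstacle is precisely this bookkeeping: confirming that the matched-taper, frequency-cancelling pairings are the only configurations of size $O(K)$ and that every other indecomposable partition---mixed tapers, non-cancelling frequencies, or blocks of size exceeding two---carries enough powers of $1/T_B$ to be killed under \autoref{assymp}.
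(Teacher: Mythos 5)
Your strategy coincides with the paper's: the method of cumulants, Brillinger's product theorem over indecomposable partitions of the $2r$ underlying fDFTs, the cumulant formula of \autoref{cumfft}, and the order bounds on the taper kernels $H_{k_1,\dots,k_m}$, with cross-block independence coming for free. Your first- and second-order computations and the identification of the diagonal ($k_1=\dots=k_r$) contribution as $O(K^{1-r/2})$ are correct. The gap is in your count of the off-diagonal taper configurations. You bound their number by the trivial $K^{r}$ and argue that a single factor of $1/T_B$ from one mixed-taper pair suffices, which after the $K^{-r/2}$ normalization leaves $K^{r/2}/T_B$. That quantity does not tend to zero under the stated hypotheses: the theorem assumes only $T/BK\to\infty$, i.e.\ $K/T_B\to 0$, so $K^{r/2}/T_B=(K/T_B)\,K^{r/2-1}$ can diverge already for $r=4$ (take $K=T_B^{0.9}$); even importing $T/K^{2}\to\infty$ from \autoref{assymp} only rescues $r=4$ and still fails for $r\ge 6$.

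The repair — and this is exactly how the paper organizes its Case 4 — is to stratify the sum by the number $r'$ of \emph{distinct} tapers rather than over all $K^{r}$ tuples. A block whose frequency sum vanishes but in which some taper appears an odd number of times contributes $H=0$ exactly; consequently any taper used by only one periodogram either annihilates the term or forces an extra factor of $T_B^{-1}$, and the configurations that survive at full order have every distinct taper shared by at least two periodograms, hence $r'\le r/2$ and only $O(K^{r/2})$ of them. Combined with the prefactor $K^{-r/2}$ and the $1/T_B$ (or exact zero) produced by the mixed-taper pair that indecomposability forces whenever $r'\ge 2$, the entire off-diagonal contribution is $O(1/T_B)=o(1)$ with no further rate condition. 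You correctly flagged this bookkeeping as the main obstacle; as written, your bound does not clear it, although every structural ingredient needed to do so is already present in your argument.
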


\begin{proof}
We will show the cumulants of the vector $\{E_{b,j}(\tau_1,\sigma_1),\dots,E_{b,j}(\tau_d,\sigma_d)\}_{b,j}$ converge to the cumulants of the vector $\{Z_{b,j}(\tau_1,\sigma_1),\dots, Z_{b,j}(\tau_d,\sigma_d)\}_{b,j}$. As the cumulants of order $l$ of the Gaussian distribution are zero for $l >2$, we will show that as $T \to \infty$ and $K \to \infty$,
\begin{align*}
  &cum\left(E_{b_1,j_1}(\tau_1,\sigma_1),\dots,\dots,E_{b_l,j_l}(\tau_l,\sigma_l)\right)\\ &= \left\{\begin{array}{ll}
    o(1) & \text{if } l \neq 2  \\
    \text{Cov}\left(Z_{b_1,j_1}(\tau_1,\sigma_1),Z_{b_2,j_2}(\tau_2,\sigma_2)\right) + o(1) & \text{if } l =2. 
\end{array} \right.  
\end{align*}
Note that the equality for $l=1$ and $2$ follow from the earlier expectation and variance calculation. Hence we have to show the result for $l \geq 3.$\\
\\
Note that
\begin{align*}
     &cum\left(E_{b_1,j_1}(\tau_1,\sigma_1),\dots,\dots,E_{b_l,j_l}(\tau_l,\sigma_l)\right)\\ &=K^{l/2}cum\left(\hat{f}^{(mt)}_{b_1,\omega_{j_1}}(\tau_1,\sigma_1),\dots,\dots,\hat{f}^{(mt)}_{b_l,\omega_{j_l}}(\tau_l,\sigma_l)\right)\\
     & = \frac{1}{K^{l/2}}\sum_{k_1}\dots\sum_{k_l}cum\left(\hat{f}^{(k_1)}_{b_1,\omega_{j_1}}(\tau_1,\sigma_1),\dots,\dots,\hat{f}^{(k_
     l)}_{b_l,\omega_{j_l}}(\tau_l,\sigma_l)\right)\\
     & = \frac{1}{K^{l/2}}\sum_{k_1}\dots\sum_{k_l}cum\left(Y_{11}Y_{12},\dots,Y_{l1}Y_{l2}\right)
\end{align*}
where $Y_{i1} = \widetilde{X}_T^{(k_i),b_i,\omega_i}(\tau_i)$ and $Y_{i2} = \widetilde{X}_T^{(k_i),b_i,\omega_i}(\sigma_i)$.
\\
Using Theorem 2.3.2 from \citep{brillinger2001}, the last quantity is equal to
$$\frac{1}{K^{l/2}}\sum_{k_1}\dots\sum_{k_l} \sum_{\nu} cum(Y_{ij}:ij \in \nu_1)\dots cum(Y_{ij}:ij \in \nu_p) =:  \sum_{\nu} C(\nu)$$
where the sum is over all indecomposable partitions $\nu = \nu_1 \cup \nu_2 \cup \dots \cup \nu_p$ of
$$\begin{array}{cc}
   (1,1)  & (1,2)  \\
   (2,1)  & (2,2) \\
   \vdots & \vdots\\
   (l,1) & (l,2).
\end{array}$$
As there are a finite number of partitions, it is enough to show $C(\nu) = o(1)$ for all indecomposible partitions $\nu$ for $l>2$. \\
\\
To this end, we note that the function $H_{k_1,k_2,\dots,k_m}(\omega) = O(T_B^{-m/2})$ if $\omega \neq 0.$ By the orthonormality and symmetry of the wave function,
$H_{k_1,k_2,\dots,k_m}(\omega) = O(T_B^{1-m/2})$ if $\omega = 0$ and all distinct $k_i$'s appear an even number of times in the index $k_1,\dots,k_m$, and $H_{k_1,k_2,\dots,k_m}(\omega) = 0$, if $\omega=0$ and any $k_i$ appear an odd number of times.\\
\\
 Let $\vert \nu_i\vert$ denote the number of elements of the set $\nu_i$. Note that $\nu$ is a partition of a set of $2l$ elements, and therefore $\sum_{i=1}^p \vert \nu_i \vert = 2l.$ By Lemma \ref{cumfft} and the property of the $H_{k_1,\dots,k_l}$ functions, we note that
 $C(\nu) = 0$ if any $\nu_i$ in $\nu$ has at least one $k_i$ an odd number of times and otherwise 
$$C(\nu) = \frac{1}{K^{l/2}}\sum_{k_1}\dots \sum_{k_l}\prod_{i=1}^p O\left(T_B^{-\vert \nu_i \vert/2 + 1\left(\sum_{j \in \nu_i}\omega_j = 0 \mod 2\pi\right)}\right) = \sum_{r=1}^{l} O(K^{r-l/2})O\left(T_B^{-l + s(\nu)}\right),$$
where $r$ is the distinct number of $k_i$'s in a collection $k_1,k_2,\dots,k_l$ and
$s(\nu)$ = the number of $\nu_i$ in $\nu$, such that $\sum_{j \in \nu_i}\omega_j = 0.$\\
\\
If $r > l/2$, then at least one of the $k_i$'s appear just once, and therefore one of the sets in the partition must have a single occurrence of that index. So it is enough to consider the case where $r \leq l/2$. Now consider the possibilities for the $O\left(T_B^{-l+s(\nu)}\right)$ term.\\ 
\\
\textbf{Case 1:} If $p < l$ then $s(\nu) \leq p < l$, and therefore $O\left(T_B^{-l+s(\nu)}\right) = o(1).$\\
\\
\textbf{Case 2:} If $p > l$, at least $2(p-l)$ sets of the partitions have just one element. (To see this, suppose $l_1$ is the number of sets with one element. Then we have $2l \geq l_1 + 2(p-l_1)$.) For all those one element sets $\sum_j \omega_j \neq 0 \mod 2\pi.$ Therefore $s(\nu) \leq p - 2(p-l) = 2l - p < l$ and consequently $O\left(T_B^{-l+s(\nu)}\right) = o(1).$
\\
\\
\textbf{Case 3:} If $p = l$ and at least one set in the partition has a single element, for that set $\sum_j \omega_j \neq 0 \mod 2\pi$, therefore $s(\nu) \leq l-1$ and $O\left(T_B^{-l+s(\nu)}\right) = o(1).$ \\
\\
\textbf{Case 4:} Consider the case where $p=l$ and all the partitions have 2 elements. Note that as $s(\nu)\leq p$, if $r < l/2$ the product $O(K^{r-l/2})O\left(T_B^{-l+s(\nu)}\right) = o(1).$ Therefore it is enough to consider the case where $l$ is even and $r = l/2.$ If $l >2$, this means there are at least two distinct $k_i$ in the collection and by indecomposibility, one of the sets $\nu_i$ in the partition must have two distinct $k_i$, making $C(\nu) = 0$.
\end{proof}

\subsection{Proof of Results in Section 3}

\subsubsection*{Proof of \autoref{lem:ghat}} Noting the definition of $\widehat{g}_{u,\omega}$ in \eqref{eq:demps}, the result follows from \autoref{lem:multpdg} and some simple algebra.

\begin{lemma}
\label{clt_g}
Consider the processes $H_{b,j} \in L^2[0,1]^2$ defined as $$H_{b,j}(\tau,\sigma) = \sqrt{K}\left(\widehat{g}_{b/B,\omega_j}(\tau,\sigma) - g_{u_b,\omega_j}(\tau,\sigma) \right)$$ for $b=1,2,\dots,B$ and $j=1,2,\dots,J$. For fixed $B$, as $T \to \infty$, $K \to \infty$ and $T/BK \to \infty$, the finite dimensional distributions of $\{H_{b,j}\}_{j,b}$ converges to a multivariate normal distribution. More precisely, for all $(\tau_1,\sigma_1),\dots,(\tau_d,\sigma_d) \in [0,1]^2$ and for all $d \in \mathbb{N}$
$$\{H_{b,j}(\tau_1,\sigma_1),\dots,H_{b,j}(\tau_d,\sigma_d)\}_{b,j} \stackrel{d}{\to} \{Z'_{b,j}(\tau_1,\sigma_1),\dots, Z'_{b,j}(\tau_d,\sigma_d)\}_{b,j}$$
where $\{Z'_{b,j}(\tau_1,\sigma_1),\dots, Z'_{b,j}(\tau_d,\sigma_d)\}_{b,j}$ is a multivariate normal random vector with zero mean and covariance structure 
\begin{align}\label{ghat_cov}
    &\text{Cov}\left(Z'_{b,j_1}(\tau_1,\sigma_1),Z'_{b,j_2}(\tau_2,\sigma_2) \right)=  \left(1 - \frac{2}{B}\right)F(u_b,\omega_1,\omega_2,\tau_1,\sigma_1,\tau_2,\sigma_2)\nonumber\\
    & ~~ \hspace{2.25 in}+ \frac{1}{B^2}\sum_{l=1}^B F(u_l,\omega_1,\omega_2,\tau_1,\sigma_1,\tau_2,\sigma_2).\nonumber\\
    &\text{Cov}\left(Z'_{b_1,j_1}(\tau_1,\sigma_1),Z'_{b_2,j_2}(\tau_2,\sigma_2) \right)=  -\frac{1}{B}F(u_{b_1},\omega_1,\omega_2,\tau_1,\sigma_1,\tau_2,\sigma_2)\nonumber\\
   &~~~~~~~~~~~~~~~~~~~~ - \frac{1}{B} F(u_{b_2},\omega_1,\omega_2,\tau_1,\sigma_1,\tau_2,\sigma_2) + \frac{1}{B^2}\sum_{l=1}^B F(u_{b_l},\omega_1,\omega_2,\tau_1,\sigma_1,\tau_2,\sigma_2),
\end{align}
where $b_1 \neq b_2$ and
\begin{equation}\label{F_cov}
    F(u,\omega_1,\omega_2,\tau_1,\sigma_1,\tau_2,\sigma_2) := f_{u,\omega_1}(\tau_1,\tau_2)f_{u,\omega_2}(\sigma_1,\sigma_2) + f_{u,\omega_2}(\tau_1,\sigma_2)f_{u,\omega_2}(\tau_2,\sigma_1).
\end{equation}

\end{lemma}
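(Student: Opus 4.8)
The plan is to realize $H_{b,j}$ as a fixed linear image of the vector $\{E_{l,j}\}$ from \autoref{clt_multpdg} and to transport the Gaussian limit through that map. Recalling \eqref{eq:demps}, the demeaned estimator is $\widehat{g}_{b/B,\omega} = \sum_{l=1}^B a_{bl}\,\hat{f}^{(mt)}_{l/B,\omega}$ with deterministic weights $a_{bl} = \delta_{bl} - 1/B$ ($\delta_{bl}$ the Kronecker delta). Hence I can write
\begin{equation*}
H_{b,j} \;=\; \sqrt{K}\bigl(\widehat{g}_{b/B,\omega_j} - g_{u_b,\omega_j}\bigr) \;=\; \sum_{l=1}^B a_{bl}\,E_{l,j} \;+\; r_{b,j},
\end{equation*}
where $r_{b,j} = \sqrt{K}\bigl(\sum_{l=1}^B a_{bl} f_{u_l,\omega_j} - g_{u_b,\omega_j}\bigr)$ is a deterministic remainder. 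Since $\sum_l a_{bl} f_{u_l,\omega_j} = f_{u_b,\omega_j} - \tfrac1B\sum_l f_{u_l,\omega_j}$ while $g_{u_b,\omega_j} = f_{u_b,\omega_j} - \int_0^1 f_{u,\omega_j}\,du$, the remainder is $\sqrt{K}$ times the midpoint Riemann-sum error $\int_0^1 f_{u,\omega_j}\,du - \tfrac1B\sum_l f_{u_l,\omega_j}$; under \autoref{cumglsp} the kernel is twice differentiable in $u$, so this error is $O(1/B^2)$ and is negligible as $B\to\infty$ in \autoref{assymp}. Equivalently, for the fixed-$B$ step one centers $H_{b,j}$ at the discretized mean $\sum_l a_{bl} f_{u_l,\omega_j}$, the discrepancy from $g_{u_b,\omega_j}$ being exactly this vanishing error.

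With the remainder handled by Slutsky's theorem, the first substantive step is to invoke \autoref{clt_multpdg}: for fixed $B$, the finite-dimensional distributions of $\{E_{l,j}\}_{l,j}$ evaluated at any $(\tau_1,\sigma_1),\dots,(\tau_d,\sigma_d)$ converge jointly to a centered Gaussian vector $\{Z_{l,j}\}$ whose covariance is block-diagonal in $l$, equal to $F(u_l,\omega_{j_1},\omega_{j_2},\tau_1,\sigma_1,\tau_2,\sigma_2)$ when the block indices coincide and $0$ otherwise. Because $\{H_{b,j}\}$ is, up to the vanishing remainder, the image of $\{E_{l,j}\}$ under the fixed linear map $A=(a_{bl})$, the continuous mapping theorem (equivalently the Cram\'er--Wold device applied to arbitrary linear functionals) yields joint convergence of the finite-dimensional distributions of $\{H_{b,j}\}$ to $\{Z'_{b,j}\}$ with $Z'_{b,j} = \sum_l a_{bl} Z_{l,j}$, which is centered and jointly Gaussian.

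It then remains to identify the limiting covariance, which is a direct computation. Using block-diagonality of $\mathrm{Cov}(Z_{l_1,j_1},Z_{l_2,j_2})$,
\begin{equation*}
\mathrm{Cov}\bigl(Z'_{b_1,j_1}(\tau_1,\sigma_1),Z'_{b_2,j_2}(\tau_2,\sigma_2)\bigr) = \sum_{l=1}^B a_{b_1 l}\,a_{b_2 l}\,F(u_l,\omega_{j_1},\omega_{j_2},\tau_1,\sigma_1,\tau_2,\sigma_2).
\end{equation*}
Expanding $a_{b_1 l}a_{b_2 l} = (\delta_{b_1 l}-1/B)(\delta_{b_2 l}-1/B)$, the diagonal case $b_1=b_2=b$ gives $\sum_l a_{bl}^2 F(u_l,\cdot) = (1-2/B)F(u_b,\cdot) + B^{-2}\sum_l F(u_l,\cdot)$, while in the off-diagonal case $b_1\neq b_2$ the product of Kronecker deltas vanishes, leaving $-B^{-1}F(u_{b_1},\cdot) - B^{-1}F(u_{b_2},\cdot) + B^{-2}\sum_l F(u_l,\cdot)$; these reproduce exactly the two expressions in \eqref{ghat_cov}.

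The main obstacle I anticipate is not the distributional step, which is essentially free once \autoref{clt_multpdg} is available (Gaussianity is preserved under the fixed linear map $A$ and no new moment computations arise), but rather reconciling the fixed-$B$ finite-dimensional convergence with the centering at $g_{u_b,\omega_j}$. Concretely, one must verify that both the $\sqrt{K}$-scaled multitaper bias of order $\sqrt{K}\log(T_B)/T_B$ from \autoref{lem:multpdg}---already absorbed into the centering of $E_{l,j}$ within \autoref{clt_multpdg}---and the $\sqrt{K}$-scaled Riemann discretization error are negligible under the rate constraints of \autoref{assymp}; the smoothness of $f_{u,\omega}$ in $u$ is the key ingredient for the latter.
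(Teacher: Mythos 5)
Your proposal is correct, and it reaches the conclusion by a different technical route than the paper. The paper's proof works entirely at the level of cumulants: it cites \autoref{lem:ghat} for the first two moments and then, writing $H_{b,j}=E_{b,j}-\frac{1}{B}\sum_{l}E_{l,j}$, uses multilinearity of cumulants to expand each order-$l$ joint cumulant of the $H$'s into $2^{l}$ terms, each a joint cumulant of $E$'s or of their block averages, and each of these vanishes for $l>2$ by the cumulant bounds established in the proof of \autoref{clt_multpdg}. You instead take the distributional statement of \autoref{clt_multpdg} as given, realize $\{H_{b,j}\}$ as the image of $\{E_{l,j}\}$ under the fixed linear map $a_{bl}=\delta_{bl}-1/B$ plus a deterministic remainder, and push the joint Gaussian limit through by the continuous mapping theorem; Gaussianity of the image is then automatic and the covariance \eqref{ghat_cov} falls out of the explicit sum $\sum_{l}a_{b_1l}a_{b_2l}F(u_l,\cdot)$ together with the block-diagonality of $\mathrm{Cov}(Z_{l_1,j_1},Z_{l_2,j_2})$ in $l$. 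Your route is more economical (no re-derivation of cumulant vanishing is needed once the joint convergence of the full vector $\{E_{l,j}\}_{l,j}$ is available) and it makes the covariance formulas transparent, whereas the paper simply refers back to \autoref{lem:ghat} for them. One point you handle more carefully than the paper: the centering mismatch between $g_{u_b,\omega_j}=f_{u_b,\omega_j}-\int_0^1 f_{u,\omega_j}\,du$ and the discretized mean $f_{u_b,\omega_j}-\frac{1}{B}\sum_l f_{u_l,\omega_j}$ produces a deterministic term of size $\sqrt{K}\cdot O(1/B^2)$, which does not vanish for fixed $B$ as $K\to\infty$; your suggestion to center at the discretized mean in the fixed-$B$ statement (with the discrepancy absorbed only once $B\to\infty$ under \autoref{assymp}) is the honest resolution of an issue the paper passes over silently inside its citation of \autoref{lem:ghat}.
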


\begin{proof}
In view of \autoref{lem:ghat} it is enough to show that the joint cumulants of $\{H_{b,j}(\tau_1,\sigma_1),\dots,H_{b,j}(\tau_d,\sigma_d)\}_{b,j}$ for $b=1,2,\dots,B$ and $j=1,2,\dots,J$ of order $> 2$ converges to $0$, as $T \to \infty,$ $K \to \infty$ and $T/K \to \infty$. Note that by definition of $\widehat{g}$, we have
$$H_{b,j}(\tau,\sigma) = E_{b,j}(\tau,\sigma) - \frac{1}{B} \sum_{b=1}^B E_{b,j}(\tau,\sigma).$$
Therefore using the linearity of cumulants (Theorem 2.3.1 (i) \& (iii) from \citep{brillinger2001}) we can write 
$cum\left(H_{b_1,j_1}(\tau_1,\sigma_1),\dots,\dots,H_{b_l,j_l}(\tau_l,\sigma_l)\right)$ as sum of $2^l$ terms, where each term is of the form
$$cum\left(\widetilde{E}_{b_1,j_1}(\tau_1,\sigma_1),\dots,\dots,\widetilde{E}_{b_l,j_l}(\tau_l,\sigma_l)\right)$$
where $\widetilde{E}_{b,j}(\tau,\sigma)$ is either ${E}_{b,j}(\tau,\sigma)$ or the average $\frac{1}{B}\sum_{b=1}^B {E}_{b,j}(\tau,\sigma)$.
Without loss of generality, consider a term where the first $k$  $\widetilde{E}_{b,j}(\tau,\sigma)$'s are the averages and the last  $(l-k)$ are ${E}_{b,j}$. That particular term can then be simplified to
$$\frac{1}{B^k}\sum_{b_1=1}^B\dots\sum_{b_k=1}^B cum\left(E_{b_1,j_1}(\tau_1,\sigma_1),\dots,\dots,E_{b_l,j_l}(\tau_l,\sigma_l)\right).$$
As $B$ is finite, the last sum converges to 0 by Theorem \ref{clt_multpdg}. As all the $2^l$ terms converge to 0 and $l$ is finite, this imples the convergence of joint cumulants of $H_{b,j}(\tau,\sigma)$ for order $l > 2$ to zero.

\end{proof}

\subsubsection*{Proof of \autoref{thm:null}}  We write $Q_{\omega_0,\delta} = \int_0^1 \int_0^1 Q_{\omega_0,\delta}(\tau,\sigma) d\tau d\sigma,$ where 
$$Q_{\omega_0,\delta}(\tau,\sigma) = \sum_{b=1}^B  \left(\widehat{g}_{b/B,\omega_0+\delta}(\tau,\sigma)- \widetilde{g}_{b/B,\omega_0,\delta}(\tau,\sigma)\right)^2.$$ Noting than the functional $:L^2\left([0,1]^2\right) \mapsto \mathbb{R}$ is continuous it is enough to show that
$$ Q_{\omega_0,\delta}(\tau,\sigma) \stackrel{d}{=} \frac{1}{K}\sum_{b=1}^B \left(\mathcal{G}_b^2(\tau,\sigma) + o_p(1)\right),$$
This will be proved in two steps. Specifically, we will show that: 
\begin{itemize}
    \item[(i)] The result holds in finite dimensional distribution, i.e., for any $k$ and any fixed $(\tau_1,\sigma_1),\dots,(\tau_k,\sigma_k) \in [0,1]^2$ the distribution of the random vector $$(Q_{\omega_0,\delta}(\tau_1,\sigma_1),\dots, Q_{\omega_0,\delta}(\tau_k,\sigma_k))$$ is asymptotically equal to the distribution of $$\frac{1}{K}\left(\sum_{b=1}^B \left(\mathcal{G}_b^2(\tau_1,\sigma_1) +o_p(1)\right), \dots, \sum_{b=1}^B \left(\mathcal{G}_b^2(\tau_1,\sigma_1) +o_p(1)\right)\right).$$
    \item[(ii)] The process $\{Q_{\omega_0,\delta}(\tau,\sigma)\}_{(\tau,\sigma) \in [0,1]^2}$ is asymptotically tight as a process in $L^2([0,1]^2).$
\end{itemize}

Without the loss of generality, we will prove (i) for $k=1$, the result for general $k$ can be proved similarly with some more notations.  Note that the scan statistics can be written as
\begin{equation*}
    Q_{\omega_0,\delta}(\tau,\sigma) = \frac{1}{K}\sum_{b=1}^B A_{b,K,T}
\end{equation*}
where
$$ A_{b,K,T}(\tau,\sigma) = K\left(\widehat{g}_{b/B,\omega_0+\delta}(\tau,\sigma)- \widetilde{g}_{b/B,\omega_0,\delta}(\tau,\sigma)\right)^2.$$
Therefore it is enough to show the process $A_{b,k,T} \in L^2([0,1]^2)$ converges in distribution to $\mathcal{G}_b^2$, where $\mathcal{G}_b$ is the Gaussian process defined in the statement of the Theorem, uniformly over $b \in \{1,2,\dots,B\}.$ In order to establish that write
\begin{align}\label{eq:A}
    A_{b,K,T}(\tau,\sigma) = &K\left(\widehat{g}_{b/B,\omega_0+\delta}(\tau,\sigma)- \E(\widehat{g}_{b/B,\omega_0+\delta}(\tau,\sigma)) - \widetilde{g}_{b/B,\omega_0,\delta}(\tau,\sigma) + \E(\widetilde{g}_{b/B,\omega_0,\delta}(\tau,\sigma))\right)^2 \nonumber\\
    &+2K\left( \widehat{g}_{b/B,\omega_0+\delta}(\tau,\sigma)  - \widetilde{g}_{b/B,\omega_0,\delta}(\tau,\sigma)\right)\left( \E(\widehat{g}_{b/B,\omega_0+\delta}(\tau,\sigma))  - \E(\widetilde{g}_{b/B,\omega_0,\delta}(\tau,\sigma))\right) \nonumber\\
    & - K\left( \E(\widehat{g}_{b/B,\omega_0+\delta}(\tau,\sigma))  - \E(\widetilde{g}_{b/B,\omega_0,\delta}(\tau,\sigma))\right)^2.
\end{align}
Suppose $L_{\delta}$ is the number of frequencies $\omega_j$ in the interval $[\omega_0,\omega_0+\delta)$. Using \autoref{lem:ghat}, under $H_0$ we have, $$\E(\widetilde{g}_{b/B,\omega_0,\delta}(\tau,\sigma)) =  \frac{1}{L_{\delta}} \sum_{\omega_j \in [\omega_0,\omega_0+\delta)} \E(\widehat{g}_{b/B,\omega_0+\delta}(\tau,\sigma)) = g_{u_b,\omega_0+\delta}(\tau,\sigma) + O(\log(T_B)/T_B) + O(1/T).$$
Therefore by \autoref{lem:ghat}, the last two terms of \eqref{eq:A} is of the order $O\left(K \times \frac{\log^2{T_B}}{T_B^2} \right) + O\left(\frac{K}{T^2}\right)$, which converges to zero under \autoref{assymp}. Note that the order of these residuals are independent of the choice of block $b$.

The first term of \eqref{eq:A} can be written as $T^2(E_{b,j_1}(\tau,\sigma), \dots, E_{b,j_{L_{\delta}+1}}(\tau,\sigma)),$ where $E_{b,j}$ is the process defined in \autoref{clt_multpdg}, the set of frequencies $\{\omega_{j_1},\dots, \omega_{j_{L_{\delta}}}\} = \{\omega : \omega \in [\omega_0,\omega_0+\delta)\}$, $\omega_{j_{L_{\delta}+1}} = \omega_0 + \delta$ and the function $T: \mathbb{R}^{L_{\delta}+1} \mapsto \mathbb{R}$ is defined as
$$T(x_1,\dots,x_{L_{\delta}+1}) = x_{L_{\delta}+1} - \frac{1}{L_{\delta}}\sum_{i=1}^{L_{\delta}} x_i.$$
Therefore an application of the Delta method along with Theorem \ref{clt_multpdg} guarantee
$$T(E_{b,j_1}(\tau,\sigma), \dots, E_{b,j_{L_{\delta}+1}}(\tau,\sigma)) \stackrel{d}{\to} \mathcal{G}_b(\tau,\sigma)$$
where $\mathcal{G}_b$ is a zero mean Gaussian process with covariance kernel given in \autoref{thm:null}. The weak convergence of $A_{b,K,T}(\tau,\sigma)$ to $\mathcal{G}_b^2(\tau,\sigma)$ then follows by the Continuous Mapping Theorem and Slutzky’s Theorem. This in turn proves the asymptotic finite dimensional distributional equivalence in (i).

Part (ii) follows from \autoref{cumfft} by (4.3) of Theorem 2 from \citep{cremers1986}. 
\subsubsection*{Proof of \autoref{thm:alt}} The proof is similar to the proof of \autoref{thm:null}. The only difference is in the treatment of the residual (second and third) terms of $A_{b,K,T}(\tau,\sigma)$ defined in \eqref{eq:A}. Note that under the alternative specified in the statement of the Theorem,
\begin{align*}
    \E(\widetilde{g}_{b/B,\omega_0,\delta}(\tau,\sigma)) =  &\frac{1}{L_{\delta}} \sum_{\omega_j \in [\omega_0,\omega_0+\delta)} \E(\widehat{g}_{b/B,\omega_0+\delta}(\tau,\sigma)) \\
    = &\frac{\#\{\omega: \omega \in (\omega_0,\omega^*)\}}{L_{\delta}}  g^{(1)}_{u_b}(\tau,\sigma) + \frac{\#\{\omega: \omega \in [\omega^*,\omega_0+\delta)\}}{L_{\delta}}  g^{(2)}_{u_b}(\tau,\sigma)\\
    &~~+ O(\log(T_B)/T_B) + O(1/T).\\
    = &\frac{\omega^* - \omega_0}{\delta}g^{(1)}_{u_b}(\tau,\sigma) + \frac{\omega_0 + \delta - \omega^*}{\delta}g^{(2)}_{u_b}(\tau,\sigma) + O(\log(T_B)/T_B) + O(1/T).\\
    = &g^{(2)}_{u_b}(\tau,\sigma) + \frac{\omega^* - \omega_0}{\delta}\left(g^{(1)}_{u_b}(\tau,\sigma) - g^{(2)}_{u_b}(\tau,\sigma)\right) + O(\log(T_B)/T_B) + O(1/T).
\end{align*}

Note that the second equality follows from the fact that $\omega_j$'s are chosen equally spaced. Therefore the residual is

$$K\left( \E(\widehat{g}_{b/B,\omega_0+\delta}(\tau,\sigma))  - \E(\widetilde{g}_{b/B,\omega_0,\delta}(\tau,\sigma))\right)^2 + o_p(1) = \frac{\omega^* - \omega_0}{\delta}\left(g^{(1)}_{u_b}(\tau,\sigma) - g^{(2)}_{u_b}(\tau,\sigma)\right) + o_p(1).$$
The the rest of the proof is similar the proof of \autoref{thm:null}.

\subsubsection*{Proof of \autoref{thm:sttest}} We write 
$$Q^0(\omega_1,\omega_2) = \frac{1}{K}\sum_{b=1}^B \int_{\omega_1}^{\omega_2} \int_0^1 \int_0^1 \left(\sqrt{K}\widehat{g}_{b/B,\omega}(\tau,\sigma)\right)^2 d\tau d\sigma d\omega = \int_{\omega_1}^{\omega_2} \int_0^1 \int_0^1 Q_0(\omega,\tau,\sigma),$$
where $$Q_0(\omega,\tau,\sigma) = \frac{1}{K}\sum_{b=1}^B \left(\sqrt{K}\widehat{g}_{b/B,\omega}(\tau,\sigma)\right)^2.$$
An application of Lemma \ref{lem:ghat} along with continuous mapping theorem guarantees under $H_0$ that the finite dimensional distributions of $Q_0(\omega,\tau,\sigma)$ are asymptotically equivalent to $\frac{1}{K}\sum_{b=1}^B  \mathcal{H}_b^2(\tau,\sigma).$ The rest of the proof follows  is similar as to the proof of Theorem \ref{thm:null}.

\subsubsection*{Proof of \autoref{alt_st}} Using similar expansion as in the proof of Theorem \ref{thm:sttest}, under the alternative we write
\begin{align*}
   Q_0(\omega,\tau,\sigma) = &\frac{1}{K}\sum_{b=1}^B \left(\sqrt{K}\widehat{g}_{b/B,\omega}(\tau,\sigma)\right)^2 \\
   = & \frac{1}{K}\sum_{b=1}^B \left(\sqrt{K}\left[\widehat{g}_{b/B,\omega}(\tau,\sigma) - g_{u}(\tau,\sigma) + g_{u}(\tau,\sigma)\right]\right)^2 \\
   = &\frac{1}{K}\sum_{b=1}^B \left(\sqrt{K}\left[ \widehat{g}_{b/B,\omega}(\tau,\sigma) - g_{u}(\tau,\sigma)\right]\right)^2  + Bg_u^2(\tau,\sigma)  \\
   &+  \frac{2 g_{u}(\tau,\sigma)}{\sqrt{K}}\sum_{b=1}^B \sqrt{K}\left[\widehat{g}_{b/B,\omega}(\tau,\sigma) - g_{u}(\tau,\sigma)\right] . 
\end{align*}
As the second term dominates under the asymptotic scheme in \autoref{assymp}, the quantity
$Q_0(\omega,\tau,\sigma) = O_p(B)$. Taking integral over $\omega, \tau, \sigma$ we have $Q^0(\omega_1,\omega_2) = O_p(B)$ and the result follows.

\subsection{Asymptotic Distribution Properties}
By  Lemma \ref{lem:ghat} the covariance of the process $\mathcal{H}_b$ defined in Theorem \ref{thm:sttest} is given by
\begin{align}\label{covH}
    \text{Cov}(\mathcal{H}_b(\tau_1,\sigma_1),\mathcal{H}_b(\tau_2,\sigma_2)) = C(b_1,b_2,\omega_1,\omega_2,\tau_1,\sigma_1,\tau_2,\sigma_2),
\end{align}
where
\begin{align}
    C(b,b,\omega_1,\omega_2,\tau_1,\sigma_1,\tau_2,\sigma_2)
    = &(1-2/B)[f_{u_b,\omega_1}(\tau_1,\tau_2)f_{u_b,\omega_2}(\sigma_1,\sigma_2)+f_{u_b,\omega_1}(\tau_1,\sigma_2)f_{u_b,\omega_2}(\tau_2,\sigma_1)] \label{C}\\
    &+\frac{1}{B^2}\sum_{l=1}^B [f_{u_l,\omega_1}(\tau_1,\tau_2)f_{u_l,\omega_2}(\sigma_1,\sigma_2)+f_{u_l,\omega_1}(\tau_1,\sigma_2)f_{u_l,\omega_2}(\tau_2,\sigma_1)]\nonumber\\
 {C}\left(b_1,b_2,\omega_1,\omega_2,\tau_1,\sigma_1,\tau_2,\sigma_2\right)
    =  &-\frac{1}{B}\left[f_{u_{b_1},\omega_1}(\tau_1,\tau_2)f_{u_{b_1},\omega_2}(\sigma_1,\sigma_2)+ f_{u_{b_1},\omega_1}(\tau_1,\sigma_2)f_{u_{b_1},\omega_2}(\tau_2,\sigma_1)\right]\nonumber\\
    &- \frac{1}{B}\left[f_{u_{b_2},\omega_1}(\tau_1,\tau_2)f_{u_{b_2},\omega_2}(\sigma_1,\sigma_2) + f_{u_{b_2},\omega_1}(\tau_1,\sigma_2)f_{u_{b_2},\omega_2}(\tau_2,\sigma_1)\right]\nonumber\\
    &~+ \frac{1}{B^2}\sum_{l=1}^B \left[f_{u_{b_l},\omega_1}(\tau_1,\tau_2)f_{u_l,\omega_2}(\sigma_1,\sigma_2) + f_{u_{b_l},\omega_1}(\tau_1,\sigma_2)f_{u_{b_l},\omega_2}(\tau_2,\sigma_1)\right]\nonumber,
\end{align}
where $b_1 \neq b_2$.\\
The covariance structure of the process $\mathcal{G}_b$ is given by
\begin{align}\label{var:diagonal}
    Cov(\mathcal{G}_b(\tau_1,\sigma_1),\mathcal{G}_b(\tau_2,\sigma_2)) = &C(b,b,\omega_0+\delta,\omega_0+\delta,\tau_1,\sigma_1,\tau_2,\sigma_2))\nonumber\\
    &+ \frac{1}{L_{\delta}^2}\sum_{j=1}^{L_{\delta}}\sum_{k=1}^{L_{\delta}}C(b,b,\omega_j,\omega_k,\tau_1,\sigma_1,\tau_2,\sigma_2))\nonumber\\
    &- \frac{1}{L_{\delta}}\sum_{j=1}^L C(b,b,\omega_0+\delta,\omega_J,\tau_1,\sigma_1,\tau_2,\sigma_2))\nonumber\\
    &- \frac{1}{L_{\delta}}\sum_{j=1}^L C(b,b,\omega_j,\omega_0+\delta,\tau_1,\sigma_1,\tau_2,\sigma_2)),
\end{align}
and for $b_1 \neq b_2$,
\begin{align}\label{var:offdiag}
    Cov(\mathcal{G}_{b_1}(\tau_1,\sigma_1),\mathcal{G}_{b_2}(\tau_2,\sigma_2)) = &C(b_1,b_2,\omega_0+\delta,\omega_0+\delta,\tau_1,\sigma_1,\tau_2,\sigma_2))\nonumber\\
    &+ \frac{1}{L_{\delta}^2}\sum_{j=1}^{L_{\delta}}\sum_{k=1}^{L_{\delta}}C(b_1,b_2,\omega_j,\omega_k,\tau_1,\sigma_1,\tau_2,\sigma_2))\nonumber\\
    &- \frac{1}{L_{\delta}}\sum_{j=1}^L C(b_1,b_2,\omega_0+\delta,\omega_j,\tau_1,\sigma_1,\tau_2,\sigma_2)))\nonumber\\
    &- \frac{1}{L_{\delta}}\sum_{j=1}^L C(b_1,b_2,\omega_j,\omega_0+\delta,\tau_1,\sigma_1,\tau_2,\sigma_2))
\end{align}
where, $C$ is as defined in \eqref{C}.

\begin{lemma}\label{lem:gb}
As $B \to \infty$, the quantities $\frac{1}{B}\sum_{b=1}^B\|\mathcal{G}_b\|^2$ and $\frac{1}{B}\sum_{b=1}^B\|\mathcal{H}_b\|^2$ are $O_p(1).$
\end{lemma}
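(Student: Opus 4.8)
The plan is to establish the result by Markov's inequality, exploiting the nonnegativity of $\|\mathcal{G}_b\|_2^2$ and $\|\mathcal{H}_b\|_2^2$ together with a uniform bound on their expectations. Since both target quantities are averages of nonnegative random variables, it suffices to show that $\mathbb{E}\|\mathcal{G}_b\|_2^2$ and $\mathbb{E}\|\mathcal{H}_b\|_2^2$ are bounded by a constant independent of $b$ and $B$.

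First I would compute the expected squared norm. Since $\mathcal{G}_b$ is a zero-mean process in $L^2([0,1]^2)$, Fubini's theorem gives
$$\mathbb{E}\|\mathcal{G}_b\|_2^2 = \int_0^1\int_0^1 \mathbb{E}\left[\mathcal{G}_b^2(\tau,\sigma)\right]\,d\tau\,d\sigma = \int_0^1\int_0^1 \text{Cov}\left(\mathcal{G}_b(\tau,\sigma),\mathcal{G}_b(\tau,\sigma)\right)\,d\tau\,d\sigma.$$
The diagonal covariance is given by \eqref{var:diagonal}, a linear combination of terms of the form $C(b,b,\omega_j,\omega_k,\tau,\sigma,\tau,\sigma)$ defined in \eqref{C}, each of which is a sum of products of the spectral density kernels $f_{u,\omega}$. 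Under \autoref{cumglsp} the kernel $f_{u,\omega}\in L^2([0,1]^2,\mathbb{C})$ is twice differentiable in both arguments and hence uniformly bounded on the compact domain, so every product appearing in $C$ is bounded by a constant $M_f$ independent of $u,\omega,b$. The frequency averages in \eqref{var:diagonal} are normalized by $L_\delta$ and $L_\delta^2$, and the block averages in \eqref{C} carry the factors $(1-2/B)$ and $\tfrac{1}{B^2}\sum_{l=1}^B$, so all of these contributions remain $O(1)$ uniformly in $B$ regardless of the number of Fourier frequencies or blocks. Integrating the bounded integrand over $(\tau,\sigma)\in[0,1]^2$ therefore yields $\mathbb{E}\|\mathcal{G}_b\|_2^2 \le C$ for a constant $C<\infty$ depending on neither $b$ nor $B$. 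The identical argument applied to the covariance \eqref{covH}--\eqref{C} of $\mathcal{H}_b$, together with integration of the bounded integrand over the fixed band $[\omega_1,\omega_2]$, gives $\mathbb{E}\|\mathcal{H}_b\|_2^2 \le C'$ uniformly.

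With the uniform bound in hand, linearity of expectation gives $\mathbb{E}\big[\tfrac{1}{B}\sum_{b=1}^B\|\mathcal{G}_b\|_2^2\big] = \tfrac{1}{B}\sum_{b=1}^B \mathbb{E}\|\mathcal{G}_b\|_2^2 \le C$, with the analogous bound holding for the $\mathcal{H}_b$ average. Since $\tfrac{1}{B}\sum_{b=1}^B\|\mathcal{G}_b\|_2^2 \ge 0$, Markov's inequality yields, for every $M>0$,
$$P\left(\frac{1}{B}\sum_{b=1}^B\|\mathcal{G}_b\|_2^2 > M\right) \le \frac{C}{M},$$
uniformly in $B$. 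Given $\epsilon>0$, choosing $M=C/\epsilon$ makes the right-hand side at most $\epsilon$ for all $B$, which is precisely the statement that $\tfrac{1}{B}\sum_{b=1}^B\|\mathcal{G}_b\|_2^2 = O_p(1)$; the same choice handles the $\mathcal{H}_b$ average.

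The main obstacle I anticipate is purely bookkeeping: verifying that the normalized frequency and block sums in \eqref{var:diagonal} and \eqref{covH} genuinely remain $O(1)$ and that the boundedness of $f_{u,\omega}$ is uniform over $u$, $\omega$, and $b$. This rests on the smoothness conclusion drawn from \autoref{cumglsp} (twice differentiability, hence boundedness, of $f_{u,\omega}$) rather than on any delicate probabilistic estimate; once uniform boundedness of the integrand is confirmed, the $O_p(1)$ conclusion is immediate. Notably, no control of the cross-block covariances \eqref{var:offdiag} is required, since we bound only the expectation of a sum of individual squared norms and never the variance of the average.
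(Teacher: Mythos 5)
Your proof is correct, but it takes a genuinely lighter route than the paper. The paper computes both the mean of $\frac{1}{B}\sum_{b=1}^B\|\mathcal{G}_b\|_2^2$ (showing via Fubini that it converges to a finite limit as the block average over $b$ becomes an integral over $u$) \emph{and} its variance (showing it is $O(1/B)$), thereby establishing the stronger conclusion that the average concentrates around a constant; $O_p(1)$ is then a by-product. You instead exploit nonnegativity: a uniform bound on $\mathbb{E}\|\mathcal{G}_b\|_2^2$ plus Markov's inequality already gives tightness of the average, with no need to control $\mathrm{Var}(\|\mathcal{G}_b\|^2)$ or the cross-block covariances of the squared norms. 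This is more elementary and, incidentally, sidesteps the one shaky step in the paper's variance bound, where $\mathbb{E}\bigl(\int\mathcal{G}_b^2\bigr)^2$ is bounded by $\bigl(\int\mathbb{E}\mathcal{G}_b^2\bigr)^2$ — an inequality that runs the wrong way by Jensen and really requires a Gaussian fourth-moment (Isserlis) identity to repair. The only soft spot in your write-up is the claim that twice differentiability of $f_{u,\omega}$ "in both arguments" yields uniform pointwise boundedness: the smoothness stated in the paper is in $u$ and $\omega$, not in $(\tau,\sigma)$, so what you should invoke is that the trace terms $\int f_{u,\omega}(\tau,\tau)\,d\tau$ and the $L^2$ norms $\|f_{u,\omega}\|_2$ are uniformly bounded (which follows from the cumulant summability in Assumption 2.1, and is the same level of justification the paper itself uses). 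With that adjustment your argument is complete, and arguably cleaner for the statement as posed; what it gives up relative to the paper is the concentration of the average, which the paper's remarks implicitly lean on when treating $\frac{1}{B}\sum_b\|\mathcal{G}_b\|_2^2$ as having stable quantiles for large $B$.
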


\begin{proof} We will show this for the case of $\mathcal{G}_b$. The proof for $\mathcal{H}_b$ is similar.

Note that
\begin{align*}
    &\E\left(\frac{1}{B}\sum_{b=1}^B\|\mathcal{G}_b\|^2\right) = \frac{1}{B}\sum_{b=1}^B\E\left(\|\mathcal{G}_b\|^2\right)\\
    &= \frac{1}{B}\sum_{b=1}^B\E\left(\int_{0}^1\int_0^1 \mathcal{G}_b^2(\tau,\sigma) d\tau d\sigma \right)\\
    & =  \frac{1}{B}\sum_{b=1}^B\int_{0}^1\int_0^1 \E \mathcal{G}_b^2(\tau,\sigma) d\tau d\sigma \\
    & \to \int_{0}^1\int_0^1\int_{0}^1 \left[f_{u,\omega_{0}+\delta}(\tau,\tau)f_{u,\omega_{0}+\delta}(\sigma,\sigma) + f_{u,\omega_{0}+\delta}^2(\tau,\sigma)\right]d\tau d\sigma du\\
    &~~+\frac{1}{L_{\delta}^2}\sum_{j=1}^{L_{\delta}}\sum_{k=1}^{L_{\delta}}\iiint \left[f_{u,\omega_j}(\tau,\tau)f_{u,\omega_k}(\sigma,\sigma) + f_{u,\omega_j}(\tau,\sigma)f_{u,\omega_k}(\tau,\sigma)\right]d\tau d\sigma du\\
    &~~-\frac{2}{L_{\delta}}\sum_{j=1}^{L_{\delta}}\iiint \left[f_{u,\omega_j}(\tau,\tau)f_{u,\omega_0+\delta}(\sigma,\sigma) + f_{u,\omega_j}(\tau,\sigma)f_{u,\omega_0+\delta}(\tau,\sigma)\right]d\tau d\sigma du,
\end{align*}
as $B \to \infty$. As $f_{u,\omega}(.)$ is continuous in $u$ and $f_{u,\omega}$ is square integrable for all $u \in [0,1]$ and $\omega \in (0,0.5]$, the integrals in the limit are finite. Note that we can exchange the integral and expectation in the second line by Fubini's Theorem as the double integral is finite. 
\begin{align*}
    Var\left(\frac{1}{B}\sum_{b=1}^B\|\mathcal{G}_b\|^2\right) = \frac{1}{B^2} \sum_{b=1}^B Var\left(\|\mathcal{G}_b\|^2\right) + \frac{1}{B^2} \sum_{b_1=1}^B\sum_{b_2=1}^B Cov\left(\|\mathcal{G}_{b_1}\|^2,\|\mathcal{G}_{b_2}\|^2\right)
\end{align*}
The first term can be simplified as
\begin{align*}
    \frac{1}{B^2} \sum_{b=1}^B Var\left(\|\mathcal{G}_b\|^2\right) &\leq \frac{1}{B^2} \sum_{b=1}^B \E\left(\|\mathcal{G}_b\|^4\right) = \frac{1}{B^2} \sum_{b=1}^B \E \left(\int_0^1\int_0^1 \mathcal{G}_b^2(\tau,\sigma)d\tau d\sigma\right)^2\\
    & \leq \frac{1}{B^2} \sum_{b=1}^B  \left(\int_0^1\int_0^1 \E \mathcal{G}_b^2(\tau,\sigma)d\tau d\sigma\right)^2 = O(1/B).
\end{align*}
Similarly with some standard algebra we can show that
\begin{align*}
    \frac{1}{B^2} \sum_{b_1=1}^B\sum_{b_2=1}^B Cov\left(\|\mathcal{G}_{b_1}\|^2,\|\mathcal{G}_{b_2}\|^2\right) = \frac{1}{B^2} \sum_{b_1=1}^B\sum_{b_2=1}^B \E\left(\|\mathcal{G}_{b_1}\|^2\|\mathcal{G}_{b_2}\|^2\right) = O(1/B).
\end{align*}
Therefore as $B \to \infty$, $\E\left(\frac{1}{B}\sum_{b=1}^B\|\mathcal{G}_b\|^2\right) = O(1)$ and $Var\left(\frac{1}{B}\sum_{b=1}^B\|\mathcal{G}_b\|^2\right) \to 0$ and hence $\frac{1}{B}\sum_{b=1}^B\|\mathcal{G}_b\|^2$ is $O_p(1)$.
\end{proof}
\end{appendix}



\section*{Acknowledgements}
Research
reported in this publication was supported by the National Institute Of General Medical Sciences
of the National Institutes of Health under Award Number R01GM140476. The content is solely the
responsibility of the authors and does not necessarily represent the official views of the National
Institutes of Health.
 
\begin{supplement}[id=suppA]
\stitle{\textbf{\texttt{R} code} for "Adaptive Frequency Band Analysis for Functional Time Series"}
\slink[url]{https://github.com/sbruce23/fEBA}
\sdescription{\texttt{R} code, a quick start demo, and descriptions of all functions and parameters needed to generate simulated data introduced in Section \ref{sec:sims} and to implement the proposed method on data for use in practice can be downloaded from GitHub at this link.}
\end{supplement}


\bibliography{functionalfreqband}       

\begin{thebibliography}{}

\bibitem[Aue and van Delft, 2020]{Aue_Delft}
Aue, A. and van Delft, A. (2020).
\newblock {Testing for stationarity of functional time series in the frequency
  domain}.
\newblock {\em The Annals of Statistics}, 48(5):2505 -- 2547.

\bibitem[Barry et~al., 2007]{BARRY20072765}
Barry, R.~J., Clarke, A.~R., Johnstone, S.~J., Magee, C.~A., and Rushby, J.~A.
  (2007).
\newblock {EEG} differences between eyes-closed and eyes-open resting
  conditions.
\newblock {\em Clinical Neurophysiology}, 118(12):2765 -- 2773.

\bibitem[Barry and {De Blasio}, 2017]{BARRY2017293}
Barry, R.~J. and {De Blasio}, F.~M. (2017).
\newblock {EEG} differences between eyes-closed and eyes-open resting remain in
  healthy ageing.
\newblock {\em Biological Psychology}, 129:293 -- 304.

\bibitem[Benoit et~al., 2000]{BENOIT20002103}
Benoit, O., Daurat, A., and Prado, J. (2000).
\newblock Slow (0.7–2 hz) and fast (2–4 hz) delta components are
  differently correlated to theta, alpha and beta frequency bands during {NREM}
  sleep.
\newblock {\em Clinical Neurophysiology}, 111(12):2103 -- 2106.

\bibitem[Billman, 2011]{Billman2011}
Billman, G. (2011).
\newblock Heart rate variability {-} {A} historical perspective.
\newblock {\em Frontiers in Physiology}, 2:86.

\bibitem[Brillinger, 2002]{brillinger2001}
Brillinger, D.~R. (2002).
\newblock {\em Time Series: Data Analysis and Theory}.
\newblock Philadelphia: SIAM.

\bibitem[Bruce et~al., 2020]{EBAsubmitted}
Bruce, S.~A., Tang, C.~Y., Hall, M.~H., and Krafty, R.~T. (2020).
\newblock Empirical frequency band analysis of nonstationary time series.
\newblock {\em Journal of the American Statistical Association},
  115:1933--1945.

\bibitem[Causeur et~al., 2009]{causeuretal2009}
Causeur, D., Kloareg, M., and Friguet, C. (2009).
\newblock Control of the {FWER} in multiple testing under dependence.
\newblock {\em Communications in Statistics - Theory and Methods},
  38(16-17):2733--2747.

\bibitem[Cremers and Kadelka, 1986]{cremers1986}
Cremers, H. and Kadelka, D. (1986).
\newblock On weak convergence of integral functionals of stochastic processes
  with applications to processes taking paths in \mbox{$L_p^E$}.
\newblock {\em Stochastic processes and their applications}, 21(2):305--317.

\bibitem[Dahlhaus, 1985]{dahlhaus1985asymptotic}
Dahlhaus, R. (1985).
\newblock Asymptotic normality of spectral estimates.
\newblock {\em Journal of Multivariate Analysis}, 16(3):412--431.

\bibitem[Doppelmayr et~al., 1998]{Doppelmayretal1998}
Doppelmayr, M., Klimesch, W., Pachinger, T., and Ripper, B. (1998).
\newblock Individual differences in brain dynamics: {I}mportant implications
  for the calculation of event-related band power.
\newblock {\em Biological Cybernetics}, 79(1):49\--57.

\bibitem[Efron, 2007]{efron2007}
Efron, B. (2007).
\newblock Correlation and large-scale simultaneous significance testing.
\newblock {\em Journal of the American Statistical Association},
  102(477):93--103.

\bibitem[Glendinning and Fleet, 2007]{GLENDINNING200779}
Glendinning, R. and Fleet, S. (2007).
\newblock Classifying functional time series.
\newblock {\em Signal Processing}, 87(1):79 -- 100.

\bibitem[Grenander, 1950]{grenander1950stochastic}
Grenander, U. (1950).
\newblock Stochastic processes and statistical inference.
\newblock {\em Arkiv f{\"o}r matematik}, 1(3):195--277.

\bibitem[Hochberg, 1988]{hochberg1988}
Hochberg, Y. (1988).
\newblock {A sharper Bonferroni procedure for multiple tests of significance}.
\newblock {\em Biometrika}, 75(4):800--802.

\bibitem[Horv\'{a}th et~al., 2020]{HORVATH2020646}
Horv\'{a}th, L., Liu, Z., Rice, G., and Wang, S. (2020).
\newblock A functional time series analysis of forward curves derived from
  commodity futures.
\newblock {\em International Journal of Forecasting}, 36(2):646 -- 665.

\bibitem[Klimesch, 1999]{Klimesch1999}
Klimesch, W. (1999).
\newblock {EEG} alpha and theta oscillations reflect cognitive and memory
  performance: {A} review and analysis.
\newblock {\em Brain Research Reviews}, 29(2):169\--195.

\bibitem[Klimesch et~al., 1998]{Klimeschetal1998}
Klimesch, W., Doppelmayr, M., Russegger, H., Pachinger, T., and Schwaiger, J.
  (1998).
\newblock Induced alpha band power changes in the human {EEG} and attention.
\newblock {\em Neuroscience Letters}, 244(2):73\--76.

\bibitem[Lenssen et~al., 2019]{nasa}
Lenssen, N. J.~L., Schmidt, G.~A., Hansen, J.~E., Menne, M.~J., Persin, A.,
  Ruedy, R., and Zyss, D. (2019).
\newblock Improvements in the gistemp uncertainty model.
\newblock {\em Journal of Geophysical Research: Atmospheres},
  124(12):6307--6326.

\bibitem[Malik et~al., 1996]{HRVtaskforce96}
Malik, M., Bigger, J.~T., Camm, A.~J., Kleiger, R.~E., Malliani, A., Moss,
  A.~J., and Schwartz, P.~J. (1996).
\newblock Heart rate variability. {S}tandards of measurement, physiological
  interpretation, and clinical use.
\newblock {\em European Heart Journal}, 17(3):354\--381.

\bibitem[Nacy et~al., 2016]{nacy2016controlling}
Nacy, S.~M., Kbah, S.~N., Jafer, H.~A., and Al-Shaalan, I. (2016).
\newblock Controlling a servo motor using eeg signals from the primary motor
  cortex.
\newblock {\em American Journal of Biomedical Engineering}, 6(5):139--146.

\bibitem[Newson and Thiagarajan, 2019]{banddefns}
Newson, J.~J. and Thiagarajan, T.~C. (2019).
\newblock Eeg frequency bands in psychiatric disorders: A review of resting
  state studies.
\newblock {\em Frontiers in Human Neuroscience}, 12:521.

\bibitem[Panaretos and Tavakoli, 2013]{panaretos2013cramer}
Panaretos, V.~M. and Tavakoli, S. (2013).
\newblock Cram{\'e}r--karhunen--lo{\`e}ve representation and harmonic principal
  component analysis of functional time series.
\newblock {\em Stochastic Processes and their Applications}, 123(7):2779--2807.

\bibitem[Panaretos et~al., 2013]{panaretos2013fourier}
Panaretos, V.~M., Tavakoli, S., et~al. (2013).
\newblock Fourier analysis of stationary time series in function space.
\newblock {\em The Annals of Statistics}, 41(2):568--603.

\bibitem[{R Core Team}, 2020]{rcite}
{R Core Team} (2020).
\newblock {\em R: {A} Language and Environment for Statistical Computing}.
\newblock R Foundation for Statistical Computing, Vienna, Austria.

\bibitem[Ramsay, 1982]{ramsay1982data}
Ramsay, J. (1982).
\newblock When the data are functions.
\newblock {\em Psychometrika}, 47(4):379--396.

\bibitem[Ramsay and Danzell, 1991]{ramsay1991}
Ramsay, J.~O. and Danzell, C.~J. (1991).
\newblock Some tools for functional data analysis.
\newblock {\em Journal of the Royal Statistical Society, B}, 53:539--572.

\bibitem[Rand, 1971]{Rand1971}
Rand, W.~M. (1971).
\newblock Objective criteria for the evaluation of clustering methods.
\newblock {\em Journal of the American Statistical Association},
  66(336):846--850.

\bibitem[Rao, 1958]{rao1958some}
Rao, C.~R. (1958).
\newblock Some statistical methods for comparison of growth curves.
\newblock {\em Biometrics}, 14(1):1--17.

\bibitem[Riedel and Sidorenko, 1995]{RidelandSidorenko1995}
Riedel, K.~S. and Sidorenko, A. (1995).
\newblock Minimum bias multiple taper spectral estimation.
\newblock {\em IEEE Transactions on Signal Processing}, 43(1):188--195.

\bibitem[Rubín and Panaretos, 2020]{rubin2020}
Rubín, T. and Panaretos, V.~M. (2020).
\newblock Sparsely observed functional time series: estimation and prediction.
\newblock {\em Electron. J. Statist.}, 14(1):1137--1210.

\bibitem[Shang and Hyndman, 2017]{shanghyndman2017}
Shang, H.~L. and Hyndman, R.~J. (2017).
\newblock Grouped functional time series forecasting: An application to
  age-specific mortality rates.
\newblock {\em Journal of Computational and Graphical Statistics},
  26(2):330--343.

\bibitem[Stoehr et~al., 2020]{STOEHR2020}
Stoehr, C., Aston, J. A.~D., and Kirch, C. (2020).
\newblock Detecting changes in the covariance structure of functional time
  series with application to fmri data.
\newblock {\em Econometrics and Statistics}.

\bibitem[Storey, 2007]{storey2007}
Storey, J.~D. (2007).
\newblock The optimal discovery procedure: {A} new approach to simultaneous
  significance testing.
\newblock {\em Journal of the Royal Statistical Society: Series B (Statistical
  Methodology)}, 69(3):347--368.

\bibitem[Thomson, 1982]{Thomson1982}
Thomson, D.~J. (1982).
\newblock Spectrum estimation and harmonic analysis.
\newblock {\em Proceedings of the IEEE}, 70(9):1055\--1096.

\bibitem[Trujillo et~al., 2017]{Trujilloetal2017}
Trujillo, L.~T., Stanfield, C.~T., and Vela, R.~D. (2017).
\newblock The effect of electroencephalogram (eeg) reference choice on
  information-theoretic measures of the complexity and integration of eeg
  signals.
\newblock {\em Frontiers in Neuroscience}, 11:425.

\bibitem[van Delft et~al., 2021]{van2017nonparametric}
van Delft, A., Characiejus, V., and Dette, H. (2021).
\newblock A nonparametric test for stationarity in functional time series.
\newblock {\em Statistica Sinica}, 31(3).

\bibitem[van Delft and Dette, 2021]{van2021similarity}
van Delft, A. and Dette, H. (2021).
\newblock A similarity measure for second order properties of non-stationary
  functional time series with applications to clustering and testing.
\newblock {\em Bernoulli}, 27(1):469--501.

\bibitem[van Delft and Eichler, 2018]{vde16}
van Delft, A. and Eichler, M. (2018).
\newblock Locally stationary functional time series.
\newblock {\em Electronic Journal of Statistics}, 12(1):107--170.

\bibitem[Walden et~al., 1995]{Waldenetal1995}
Walden, A.~T., McCoy, E.~J., and Percival, D.~B. (1995).
\newblock The effective bandwidth of a multitaper spectral estimator.
\newblock {\em Biometrika}, 82(1):201--214.

\bibitem[Wei et~al., 2018]{eoecpaper}
Wei, J., Chen, T., Li, C., Liu, G., Qiu, J., and Wei, D. (2018).
\newblock Eyes-open and eyes-closed resting states with opposite brain activity
  in sensorimotor and occipital regions: {M}ultidimensional evidences from
  machine learning perspective.
\newblock {\em Frontiers in Human Neuroscience}, 12:422.

\end{thebibliography}

\end{document}